\documentclass[12pt]{article}

\usepackage{natbib}
\setcitestyle{authoryear,round,notesep={; },aysep={},yysep={;}}

\usepackage{amsmath,amssymb,amsfonts,amsthm,mathrsfs}
\usepackage[dvipsnames]{xcolor}
\usepackage{hyperref}
\usepackage{tabularx}
\usepackage{graphicx}
\usepackage{multirow}
\usepackage{booktabs}
\usepackage{textcomp}
\usepackage{epstopdf}
\usepackage{lipsum}
\usepackage{algorithm}
\usepackage{algorithmicx}
\usepackage{algpseudocode}
\usepackage{listings}
\usepackage{subcaption}
\usepackage{makecell}
\usepackage{gensymb}
\usepackage{tabularx}
\usepackage{epsfig, graphics, graphicx}
\usepackage{epstopdf}
\usepackage{tikz}
\usepackage{multirow}
\usepackage{color}


\makeatletter
\newcommand{\removeperiod}{\@ifnextchar.{\@gobble}\relax}
\makeatother
\hypersetup {
    colorlinks = true,
    linkcolor = blue,
    filecolor = magenta,
    urlcolor = NavyBlue,
    citecolor = NavyBlue
}

\usepackage{geometry}
\geometry{
    a4paper,
    total={170mm,250mm},
    left=20mm,
    top=20mm,
}
\setlength{\textheight}{9.223in}
\setlength{\textwidth}{6.675in}
\setlength{\columnsep}{0.3125in}
\setlength{\topmargin}{0in}
\setlength{\headheight}{0in}
\setlength{\headsep}{0in}
\setlength{\parindent}{1pc}
\setlength{\oddsidemargin}{-.1975in}
\oddsidemargin0mm 
\topmargin-13mm 
\textheight235mm 
\textwidth 160mm




\newtheorem{theorem}{Theorem}[section]
\newtheorem{lemma}[theorem]{Lemma}
\theoremstyle{remark}
\newtheorem{remark}[theorem]{Remark}

\makeatletter
\def\th@remark{%
  \thm@headfont{\normalfont\bfseries}
  \thm@notefont{\normalfont\bfseries}
  \thm@bodyfont{\normalfont}
  \thm@headpunct{.}
}
\makeatother
\newcommand{\eps}{\varepsilon}
\usepackage{authblk}



\title{Derivations of Animal Movement Models with Explicit Memory}
\author{%
    Tianxu Wang$^{*}$, Kyunghan Choi$^{*}$, Hao Wang$^{\dagger}$%
}
\affil[]{Interdisciplinary Lab for Mathematical Ecology \& Epidemiology, University of Alberta, Edmonton, Alberta T6G 2G1, Canada}
\affil[]{Department of Mathematical and Statistical Sciences, University of Alberta, Edmonton, Alberta T6G 2G1, Canada}

\begin{document}

\maketitle

\renewcommand{\thefootnote}{\fnsymbol{footnote}} 
\footnotetext[1]{Co-first authors}
\footnotetext[2]{Corresponding author: hao8@ualberta.ca}

\begin{abstract}
1. Highly evolved animals continuously update their knowledge of social factors, refining movement decisions based on both historical and real-time observations. Despite its significance, research on the underlying mechanisms remains limited. 


2. In this study, we explore how the use of explicit memory shapes different mathematical models across various ecological dispersal scenarios. Specifically, we investigate three memory-based dispersal scenarios: gradient-based movement, where individuals respond to environmental gradients; environment matching, which promotes uniform distribution within a population; and location-based movement, where decisions rely solely on local suitability. These scenarios correspond to diffusion advection, Fickian diffusion, and Fokker-Planck diffusion models, respectively.


3. We focus on the derivation of these memory-based movement models using three approaches: spatial and temporal discretization, patch models in continuous time, and discrete-velocity jump process. 
These derivations highlight how different ways of using memory lead to distinct mathematical models. 

4. Numerical simulations reveal that the three dispersal scenarios exhibit distinct behaviors under memory-induced repulsive and attractive conditions. The diffusion advection and Fokker-Planck models display wiggle patterns and aggregation phenomena, while simulations of the Fickian diffusion model consistently stabilize to uniform constant states.
\end{abstract}
\textbf{Keywords:} Animal movement, Dispersal scenarios, Explicit memory, Formal derivation, Diffusion-advection model, Fickian diffusion, Fokker-Planck diffusion

\makeatother

\section{Introduction}
Animal movement is not purely random; it is often influenced by internal factors, particularly memory. Many animals rely on memory to guide their movements, storing and utilizing information about their environment over time. This spatiotemporal memory helps them navigate more efficiently, as they continuously update their understanding of their surroundings. For example, pigeons can recall and report past actions \citep{zentall2001episodic}, and chimpanzees can use lexigrams to indicate food locations hours after observing baiting events \citep{terrace2005missing}. These behaviors illustrate how memory-based movement decisions enable animals to refine their navigation strategies based on past experiences \citep{schlagel2014detecting, martin2010keeping}.
Moreover, memory-driven movement is often influenced by social factors. Although animals rely on their individual memory, the behaviors of the group can also guide their movements. For example, animals may adjust their own movement strategies based on the observed behavior of others, particularly in species where group cohesion or social interactions play a significant role \citep{ford2013two,matthysen2005density,cressman2011effects,allee1927animal}. This interaction between individual and collective memory helps animals navigate within social groups, ensuring more efficient group dynamics.

While earlier studies largely focused on the relationships between static environmental factors and movement behavior, it has become increasingly clear that many animals possess the ability to acquire, store, and utilize information about their surroundings over time \citep{stevens1997aggregation}. This information is not only based on fixed snapshots of the environment but also on temporal changes that accumulate over time. Highly evolved animals continuously update their understanding of environmental dynamics, refining their movement decisions based on past and current observations. This ability to integrate temporal information allows animals to adapt their strategies to changing conditions, making their navigation more efficient \citep{owen2010foraging, bar2011use}. Despite its importance, research exploring the role of cumulative temporal information in movement remains limited. Furthermore, the dispersal strategies of animals based on explicit memory vary depending on the situation. It is important to consider different types of movement alongside memory when developing models.

In recent years, movement models have expanded to include different ways that animals process and respond to their environment \citep{fagan2013spatial, avgar2013empirically}. These models provide a tool for understanding how animals navigate based on cognitive and contextual factors.
A mathematical form of the context-dependent movement model with a certain environmental cue $v=v(x,t)$ can be formulated as the generalized equation
\begin{equation*}\label{eq: generalized model}
    u_t = \nabla\cdot (D(v)\nabla u + A(v)u\nabla v),
\end{equation*} where $u$ is the population density of an animal, $D(s)>0$ and $A(s)$ can be positive or negative for all $s\ge 0$. $v$ represents a quantity corresponding to the context in the model. There have been many studies with specific choices of $D$, $A$ and $v$ \citep{Winkler2005,Winkler2010,Yoon2017FP,YJK2019} depending on situations. The determination of 
$D$ and $A$ are influenced by dispersal scenarios of animals: first, measuring the gradient of environmental factors, $\nabla v$, at the current location to guide movement (Diffusion-Advection equations); second, matching environmental cues of the current location with those of surrounding areas to decide movement (Fickian diffusion); and third, making movement decisions based solely on the environmental factors, the quantity of $v$, at the current location (Fokker-Planck diffusion). These methods provide various ways to model movement behaviors in ecological and cognitive contexts.

\begin{figure}
    \centering
    \includegraphics[width=0.8\linewidth]{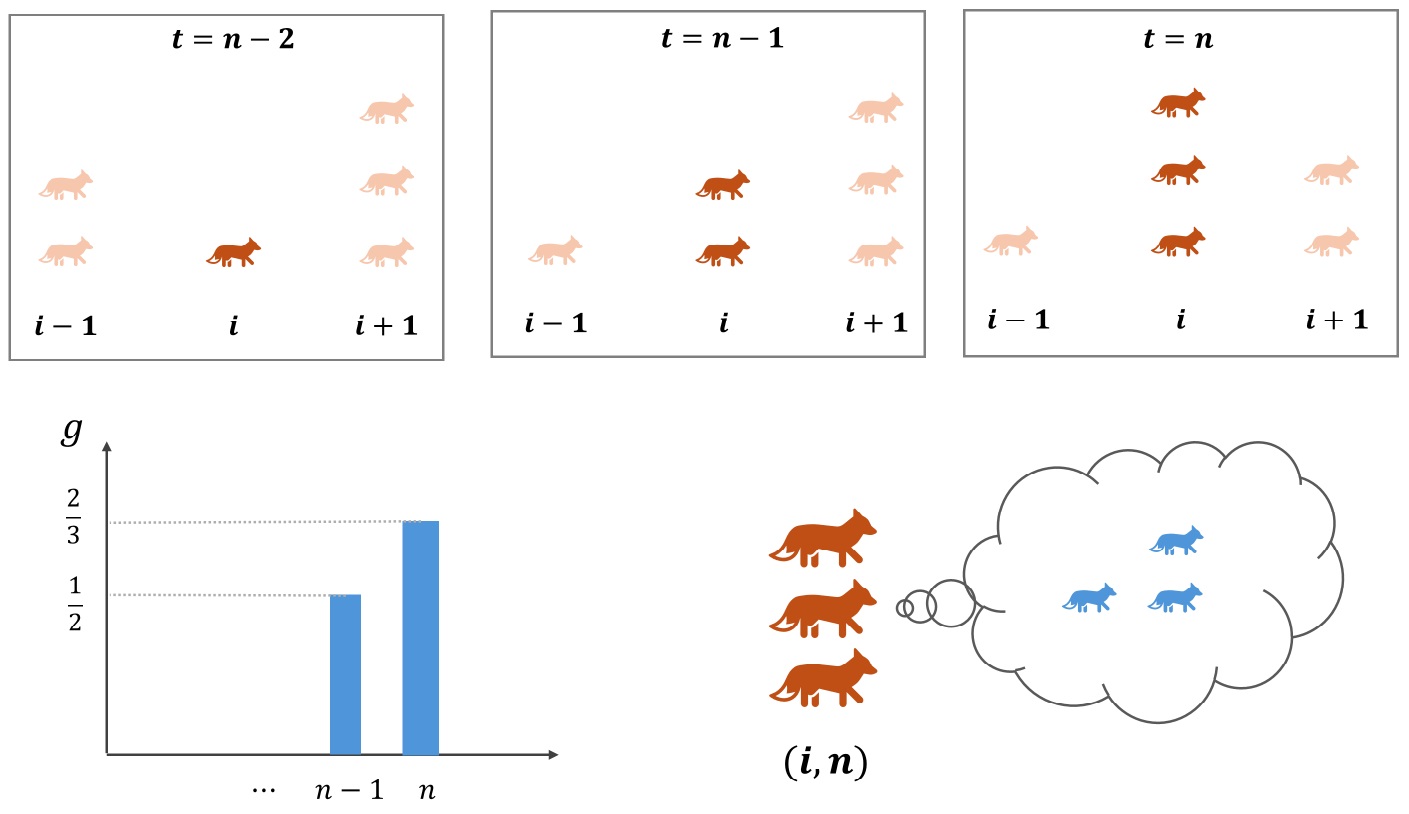}
    \caption{An illustration of the memory acquisition process in foxes is shown. The three figures at the top depict the movement of six foxes across spatial locations \(i-1\), \(i\), and \(i+1\) over time steps \(t = n-2\), \(n-1\), and \(n\). The bottom-left graph represents memory capacity at each time step, showing that the foxes retain memories only from the recent times \(t = n-1\) and \(n\).  
If \(u_i^n\) denotes the population at \((i, n)\), the three foxes at location \(i\) and time \(n\) have a memory quantified as 
\(
v^n_i = g(n-1) u_i^{n-1} + g(n) u_i^n =\frac{1}{2} u_i^{n-1} + \frac{2}{3} u_i^n,
\)
which equals 3 in this figure.}
    \label{fig:memory process}
\end{figure}
To address the continuous acquisition of memory from social behavior, we think of \( v \) as information accumulated continuously over time, and formulated in the form of 
\begin{equation*}
\label{eq: g convolution m}
    v(x,t)=(g*_tu)(x,t) = \int_0^tg(t-s)u(x,s)ds,
\end{equation*} where $g$ is the temporal kernel.  The information from different times does not carry the same significance. The discretized version of the memory-gaining process can be seen in Figure \ref{fig:memory process}. In addition, animals have different memorizing mechanisms, so the choice of kernel can vary. Typically, more recent information is more important, and this varying importance is captured by a weighting function \( g(t) \). For example, if take an exponentially decaying kernel $g(t) = e^{-\mu t}$ with $\mu>0$, then 
$v$ represents the exponentially weighted historical population density.
One can easily show that $u$ and $v$ satisfy the following coupled PDE-ODE system:
\begin{equation*}
    \begin{cases}
        u_t = \nabla \cdot \big( D(v) \nabla u + A(v) u \nabla v \big), \\
        v_t = u - \mu v,
    \end{cases}
\end{equation*}
where $D$ and $A$ are functions of $v$. Similar systems, with specific choices of $D$ and $A$, have been analyzed in the context of chemotaxis models, particularly with a diffusion of $v$~\citep{Winkler2005, Yoon2017FP, Winkler2010}. From the perspective of memory models, $v(x,t)$ can be interpreted as a memory map acquired by the population $u(x,t)$ at position $x$ and time $t$ by marking their territories, which decays over time at a rate $\mu$ \citep{Potts2019,Wang2023}. The model then explains the movement of the animals by using local marking information $v(x,t)$ from the memory map.  
In the case of fixed memory at a certain time, where   $g=\delta_\tau$ (a delta distribution at $\tau$), 
Shi et al. studied a diffusion advection model ($D(s)\equiv D$ and  
$A(s)\equiv \alpha$) through delayed density distributions \citep{shi2017hopf}:
\[
u_t = D \Delta u + \alpha \nabla\cdot (u\nabla (\delta_\tau*_tu)), \quad x \in \Omega, \, t > 0,
\]
where $\tau > 0$ is a fixed time delay and $\delta_\tau*_tu = u(x,t-\tau)$. Here, the cognitive movement driven by episode-like memory is represented by \(\alpha\nabla\cdot (u\nabla (\delta_\tau*_tu))\), which assumes that movement decisions are based on information available at a specific time, $\tau$ time ago. This idea has since inspired further work on the dynamics of fixed delayed-time information \citep{wang2023dynamics, song2022spatiotemporal, song2019spatiotemporal, shi2020diffusive, shi2019diffusive, zhang2023diffusive}.

The purpose of the paper is to derive memory-dependent diffusion models that depend on the temporally distributed population, rewritten by a general equation
\begin{equation}\label{eq: main equation}
    u_t = \nabla\cdot (D(g*_tu)\nabla u + A(g*_tu)u\nabla(g*_tu)),
\end{equation} where $D(s)>0$ and $A(s)$ can be positive or negative for all $s\ge 0$, and 
\[
g*_tu = \int_0^tg(t-s)u(x,s)\,ds.
\] $u(x,t)$ is the density of an animal that can gain and memorize its temporally distributed density $g*_tu$ with a positive kernel function $g$. The model describes the movement of animals sharing and using accumulated information about population density with a certain type of dispersal strategy. Specifically, three different dispersal ways determine different choices of $D$ and $A$:
\begin{itemize}
    \item [(i)] $D(s)\equiv D$ and $A(s)\equiv \gamma'(s)$, a diffusion advection equation: \[ u_t = D\Delta u +  \nabla\cdot(u\nabla\gamma(g*_tu));\]
    \item [(ii)] $D(s)\equiv \gamma(s)$ and $A(s)\equiv 0$, the Fick's law diffusion equation: \[u_t = \nabla\cdot(\gamma(g*_tu)\nabla u); \]
    \item [(iii)] $D(s)=\gamma(s)$ and $A(s)= \gamma'(s)$, the Fokker-Planck type diffusion equation: \[ u_t = \Delta (\gamma(g*_tu)u) . \]
\end{itemize} These diffusion models illustrate three distinct movement strategies, all based on the same memorized quantity \( g *_t u \). The function \(\gamma\) represents a dispersal rate influenced by memory. When \(\gamma\) is an increasing function, animals tend to move toward areas with lower memory. Conversely, if \(\gamma\) is a decreasing function, they tend to move toward areas with higher memory.  
In the first case, animals measure the gradient of temporal memory, \(\nabla \gamma(g *_t u)\), and base their directed movement decisions on this gradient. The second case describes scenarios where the dispersal rate is matched with the surroundings, depending on their memory. The third case represents movement determined solely by the value of \(\gamma(g *_t u)\), without reference to its gradient.
We derive these three equations in Section~2, demonstrating different approaches to how animals decide their movement based on memory.


The structure of the paper is as follows. In Section \ref{sec:dispersal scenarios}, we discuss the ecological implications of the three diffusion types. In Section \ref{sec:Formal derivations}, we formally derive these three memory-dependent diffusion models using three different approaches. First, we discretize both time and space variables and study the recursive relation between past and present states by considering the discrete accumulated memory over time. Next, we distinguish the three equations using patch-based models to explore the implications of different diffusion types on movement. Finally, we derive the models through the discrete velocity jump process. In Section \ref{sec:Numerical simulation}, we compare the performance of the three diffusion types through numerical simulations. Additionally, the well-posedness of the generalized model is provided in Appendix \ref{app:wellposedness}, and the linear stability for a special case is addressed in Appendix \ref{app:linear stability}.

\section{Three memory-based dispersal scenarios}
\label{sec:dispersal scenarios}

In this section, we discuss how three ecological movement models describe animal movement based on explicit memory. Organisms employ various dispersal strategies in ecological systems when interacting with their environment, and these strategies can be captured through different modeling approaches. Diffusion models serve as a foundational framework for understanding movement behaviors. Constant diffusion, for instance, has explained random animal movement. However, in heterogeneous environments, diffusion processes become more complex, requiring models that account for spatial variability \citep{Hoyoun2024HD}. An animal's decision is influenced by cumulative memory across different locations. Non-homogeneous memory introduces biased movement, amplifying spatial heterogeneity in population distribution. Therefore, understanding the relationship between memory and ecological dispersal strategies is essential for accurately modeling such behaviors. We introduce three ecological dispersal scenarios.

\begin{itemize}
    \item [(i)] Gradient-Based Movement: organisms measure the gradient of environmental factors at their current location to guide movement. This approach captures the active response of individuals to changes according to their surroundings, incorporating both random roaming and directed movement along population gradients. For example, migratory birds, such as geese, use memory to recall landmarks and environmental cues, such as wind patterns or temperature gradients, during long-distance migrations \citep{zein2022simulating}. This ability to remember key factors helps them navigate effectively. Similarly, most territorial animals can sense population pressure and decide whether to move or stay. When roaming within a small area, they compare historical environmental factors in the vicinity with those at their current position to determine whether to remain or move to a different location. 
    
    Mathematically, this behavior is modeled using diffusion advection equations, where memory-driven movement decisions depend on the spatial gradient of accumulated environmental information. Such models are particularly useful for studying scenarios where animals respond to spatially varying conditions.
    
    
\item [(ii)] Environment Matching: In this scenario, organisms exhibit group-level behavior by matching their dispersal rate with individuals in their vicinity to achieve a uniform population distribution. They compare environmental cues within the group and coordinate their movement to achieve an even distribution that optimizes resource use. Animals use this strategy to avoid overcrowding and maintain steady access to food. Their collective memory of past group formations helps them balance the environment’s resources and maintain stability.

This scenario is modeled by Fickian diffusion equations, where the dispersal rate matches the local environmental conditions. Unlike gradient-based movement, which involves directional bias, environment matching emphasizes homogeneity, reducing spatial heterogeneity in population distribution.
    
   \item [(iii)] Location-Based Movement: In this case, animals decide whether to move or stay based on the suitability of their current location. Unlike gradient-based movement, this strategy does not involve comparing the current location with other areas; instead, it focuses on the animal's satisfaction with its immediate surroundings. Memory plays a crucial role in helping animals evaluate whether similar past conditions led to positive outcomes. For instance, territorial animals assess their location based on factors such as prey availability, safety, and environmental conditions \citep{clobert2009informed}. If the location satisfies their needs, they stay; otherwise, they move to explore better territories. When population density is a key factor in determining movement, animals rely on past cumulative experiences to evaluate the population abundance at a given location. If the current abundance meets their needs, they stay; otherwise, they move to explore more favorable conditions.

   This behavior is captured by Fokker-Planck diffusion equations, which focus on the influence of local environmental memory. Unlike gradient-based movement, this strategy depends solely on the local value of memory, without reference to its spatial gradient.

\end{itemize}

When modeling ecological dispersal with memory, it is crucial to accurately capture how memory interacts with social behaviors to influence movement. These scenarios provide diverse perspectives for understanding memory-based dispersal in ecological and cognitive contexts. In the next section, we derive three memory-based diffusion models corresponding to the three ecological dispersal strategies using different mathematical methods.

\section{Formal derivations of explicit memory models}
\label{sec:Formal derivations}

Most of the studies on cognitive movement are mainly based on diffusion advection frameworks, represented as \(\Delta u + \nabla(u\nabla \gamma(x))\). However, alternative forms of diffusion could produce substantially different outcomes in heterogeneous environments.  For example, Fick’s law, $\nabla (\gamma(x)\nabla u)$, is commonly applied to physical diffusion influenced by geometric or boundary conditions \citep{fick1855ueber, bringuier2011particle, okubo2001diffusion}. In contrast, the Fokker-Planck diffusion operator $\Delta (\gamma(x)u)$ is widely used in ecological models to describe processes such as random walks or population dynamics guided by local environmental cues \citep{fokker1914mittlere, planck1917satz, okubo2001diffusion, cantrell2006movement, Yoon2017FP, JieLaurencot2021chemo, WangKim2022}.
Depending on the biological context, one can select the most appropriate operator to accurately model movement behaviors aligned with specific environmental or ecological scenarios.

In this section, we derive the memory-induced diffusion models using three distinct methods:  discretization of space and time, patch-model approach, and discrete velocity jump processes.


\subsection{One-Dimensional Discrete Method}

We now derive the equation through the discretization method. Let the one-dimensional spatial domain \( [0, L] \) be divided into discrete points indexed by \( l \), and the temporal domain \( [0, t] \) (up to the current time) be divided into \( N \) discrete steps indexed by \( k \). Denote the population density at position \( l \) and time \( k \) as \( u_l^k \).

\begin{center}
\begin{tikzpicture}[>=stealth, scale=1.5]

\draw[gray, thick] (-1.5, 1.6) -- (-1.5, -1.6);
\draw[gray, thick] (-0.5, 1.6) -- (-0.5, -1.6);
\draw[gray, thick] (0.5, 1.6) -- (0.5, -1.6);
\draw[gray, thick] (-3, 1) -- (2, 1);
\draw[gray, thick] (-3, 0) -- (2, 0);
\draw[gray, thick] (-3, -1) -- (2, -1);

\draw[->, thick] (-0.5, 1.8) -- (-0.5, -1.9);
\draw[->, thick] (-3, -1) -- (2, -1);

\node[draw, circle, fill=white!20, minimum size=3pt, align=center] at (-1.5, 1) {\(u_{l-1}^{N-2}\)};
\node[draw, circle, fill=white!20, minimum size=3pt, align=center] at (-0.5, 1) {\(u_l^{N-2}\)};
\node[draw, circle, fill=white!20, minimum size=3pt, align=center] at (0.5, 1) {\(u_{l+1}^{N-2}\)};
\node[draw, circle, fill=white!20, minimum size=3pt, align=center] at (-1.5, 0) {\(u_{l-1}^{N-1}\)};
\node[draw, circle, fill=white!20, minimum size=3pt, align=center] at (-0.5, 0) {\(u_l^{N-1}\)};
\node[draw, circle, fill=white!20, minimum size=3pt, align=center] at (0.5, 0) {\(u_{l+1}^{N-1}\)};
\node[draw, circle, fill=white!20, minimum size=3pt, align=center] at (-1.5, -1) {\(u_{l-1}^{N}\)};
\node[draw, circle, fill=blue!20, minimum size=3pt, align=center] at (-0.5, -1) {\(u_l^N\)};
\node[draw, circle, fill=white!20, minimum size=3pt, align=center] at (0.5, -1) {\(u_{l+1}^{N}\)};

\node[below] at (-1.5, 2.3) {\(l-1\)};
\node[below] at (-0.5, 2.3) {\(l\)};
\node[below] at (0.5, 2.3) {\(l+1\)};
\node[below] at (-0.5, -1.9) {\(t\)};
\node[below] at (2.2, -0.83) {\(x\)};

\end{tikzpicture}
\end{center}

For an individual in motion, each step is influenced by information accumulated from all past time points. This accumulated information at position $l$ can be modeled by
\[
\tilde u_l=\frac{t}{N} \sum_{k=1}^N \phi^k u_l^k,
\]
where \( \phi^k \) is a weight function representing the importance of the information at time point \( k \), and \( \frac{t}{N} \) is the time step length. In this section, we gave an example using the aggregation case, assuming animals have a tendency to move toward higher population
densities. For a repulsion case, one can use a similar argument.

\subsubsection{Diffusion-Advection model}
Consider that the relative population density influences the probability of movement in the vicinity. For instance, if the population aggregation at position \( l+1 \) is greater than that at \( l \), individuals at \( l \) are more likely to move toward position \( l+1 \). This movement tendency depends not only on the current population density information but also on historical information. The difference, accumulated over all past time, between the weighted population densities at \( l+1 \) and \( l \), is given by
\begin{equation*}
\label{eq:tilde_u}
    \tilde{\Tilde{u}}_{l} = \tilde u_{l+1} - \tilde u_l = \frac{t}{N} \left(\sum_{k=1}^N \phi^k u_{l+1}^k - \sum_{k=1}^N \phi^k u_l^k \right).
\end{equation*}
The probabilities for an individual at position \( l \) to move left or right are then
\begin{align*}
    P_{\text{left}} &= p - (1-p) \tilde{\tilde{u}}_{l-1}, \\
    \label{eq:P_right}
    P_{\text{right}} &= p + (1-p) \tilde{\tilde{u}}_l,
\end{align*}
where \( 0 < p < 1 \) represents random diffusion driven movement, and \( (1-p) \tilde{\Tilde{u}}_i \)  accounts for directed cognitive movement influenced by historical population information. Hence, if the accumulated historical population at \( l+1 \) is greater than that at \( l \), i.e., \( \tilde{\tilde{u}}_l > 0 \), then \( P_{\text{right}} > p > P_{\text{left}} > 0 \), indicating an increasing trend toward the right. The change of population $\delta u_l$ at position \( l \) is determined by individuals leaving \( l\) and individuals arriving from \( l-1 \) and \( l+1 \):
\begin{equation}
\label{eq:delta_u}
\begin{aligned}
    \delta u_l &= - \left(p - (1-p)\tilde{\Tilde{u}}_{l-1}\right) u_l - \left(p + (1-p)\tilde{\Tilde{u}}_l\right) u_l 
    \\
    & \quad + \left(p + (1-p)\tilde{\Tilde{u}}_{l-1}\right) u_{l-1} + \left(p - (1-p)\tilde{\Tilde{u}}_l\right) u_{l+1}.
    \end{aligned}
\end{equation}
Linearize \( u_{l\pm 1} \) and \( u_{l\pm 1}^k \) around \( u_l \) and \( u_l^k \) respectively, we obtain 
\begin{equation}
\label{eq: Tylor expansion}
\begin{aligned}
    u_{l\pm 1} &= u_l \pm h \left(u_l\right)_x + \frac{h^2}{2} \left(u_l\right)_{xx} + O(h^3), \\
    u_{l\pm 1}^k &= u_l^k \pm h \left(u_l\right)_x^k + \frac{h^2}{2} \left(u_l\right)_{xx}^k + O(h^3).
\end{aligned}
\end{equation}
For notation simplicity, we use $u$ for $u_l$ in the following. Substitute approximations \eqref{eq: Tylor expansion} into \eqref{eq:delta_u}, giving
\begin{equation*}
    \delta u = ph^2 u_{xx} - 2(1-p) h^2 \frac{t}{N} \sum_{k=1}^N \phi^k \left(u_x^k u\right)_x.
\end{equation*}
Consider population changes \(  \delta u \) in a short time interval \( \tau \), then \(u_t = \lim\limits_{\tau \to 0} \frac{\delta u}{\tau}\). Let \( D = p \frac{h^2}{\tau} \) and \( \alpha = 2(1-p) \frac{h^2}{\tau} \). Dividing each side by \( \tau \), and letting $\tau \to 0$, we obtain
\begin{equation*}
    u_t = D u_{xx} - \alpha \frac{t}{N} \sum_{k=1}^N \phi^k \left(u_x^k u\right)_x,
\end{equation*}
which is a one-dimensional continuous diffusion advection model. The corresponding higher-order version is
\begin{equation}
\label{eq: discrete DA}
    u_t = D \Delta u  - \alpha \nabla \cdot \left( u \nabla \left(g*_t u\right)\right).
\end{equation}
For a more general case, using a similar argument and substituting \( \tilde{u}_l \) with \( \gamma(\tilde{u}_l) \), one can derive  
\[
u_t = D\Delta u + \nabla \cdot \big(u \nabla \gamma(g *_t u)\big),
\]
where the utilization of information is more general and is modeled using the function \( \gamma \). Equation \eqref{eq: discrete DA} is a special case when $\gamma$ is linear.

\subsubsection{Fickian type diffusion}
If dispersal is symmetric, i.e., the probability of an individual moving from \( l \) to \( l+1 \) is the same as the probability of moving from \( l+1 \) to \( l \), then the probabilities for an individual at position \( l \) to move left or right are given by
\begin{align*}
    P_{\text{left}} &= p - (1-p) \tilde u_{l-1}, \\
    P_{\text{right}} &= p - (1-p) \tilde u_l. 
\end{align*} 
This implies that individuals do not move purely randomly; instead, they adjust their direction based on the accumulated information at their current position and nearby locations. By a similar argument as above, 
applying expansions \eqref{eq: Tylor expansion}, we obtain
\begin{equation*}
    \delta u =  p h^2 u_{xx} - (1-p)h^2 \frac{t}{N} \sum_{k=1}^N \phi^k \left(u^k u_x\right)_{x}.
\end{equation*}
Let \( D = \frac{p h^2}{\tau} \) and \( \alpha = \frac{(1-p) h^2}{\tau} \). Dividing both sides by \( \tau \) and taking the limit as \( \tau \to 0 \), the population change rate is then given by
\begin{equation*}
    u_t = D u_{xx} - \alpha \frac{t}{N} \sum_{k=1}^N \phi^k \left(u^k u_x\right)_{x}.
\end{equation*}
This is a Fickian type diffusion equation.
The corresponding higher dimensional continuous version is
\begin{equation}
\label{eq: discrete FI}
    u_t = D \Delta u - \alpha \nabla \cdot \left(\left(g *_t u\right) \nabla u \right) =  \nabla \cdot \left( \left(D - \alpha \left(g *_t u\right) \right)\nabla u \right).
\end{equation}
Here $D$ is the random diffusion coefficient and $\alpha$ is the directed sensitivity coefficient. With $\alpha$ increasing, animals have stronger directed cognitive movement based on temporal population distribution information.
By substituting \( - \tilde{u}_l \) with \( \gamma(\tilde{u}_l) \) and setting \( p = 0 \), one can obtain 
\[
u_t = \nabla \cdot \big( \gamma(g *_t u) \nabla u \big).
\]
This is a more general form of Fickian type diffusion, animals can obtain and use historical information as function $\gamma(g *_t u) $. Equation \eqref{eq: discrete FI} is a special case when $\gamma(z)= D-\alpha z$. 

\subsubsection{Fokker-Planck type diffusion}
If movement is random, i.e. at any position, the probabilities of an individual moving to the left or right are equal, then for an individual at position \( l \), we have
\begin{align*}
    P_{\text{left}} = P_{\text{right}} = p - (1-p) \Tilde{u}_{l}.
\end{align*}
This probability reflects another movement strategy of individuals, where they adjust their movement based on the accumulated local population information \( \Tilde{u}_{l} \).
Applying expansions \eqref{eq: Tylor expansion}, we obtain
\begin{equation*}
    u_t = D u_{xx} - \alpha \frac{t}{N} \sum_{k=1}^N \phi^k \left(u^k u\right)_{xx},
\end{equation*}
where \( D = \frac{p h^2}{\tau} \) and \( \alpha =  \frac{(1-p) h^2}{\tau} \). This is a one-dimensional Fokker Planck type diffusion equation. The corresponding higher-dimensional version is
\begin{equation*}
    u_t = D \Delta u - \alpha \Delta 
 (u \left(g *_t u\right)) = \Delta\left(\left( D - \alpha  \left(g *_t u\right)\right) u \right).
\end{equation*}
Similarly, by substituting \( -\tilde{u}_l \) with \( \gamma(\tilde{u}_l) \) and setting \( p = 0 \), we have  
\[
u_t = \Delta \big( \gamma(g *_t u)  u \big).
\]
The repellent case, in which animals tend to move away from regions of higher population density, can be derived using the same argument with a change in the sign of $\alpha$.

\subsection{Patch models}
In this section, we consider a continuous time model over discrete patches.
\begin{center}
\begin{tikzpicture}[thick, >=stealth]

\node[draw, circle, fill=blue!30, minimum size=1.5cm] (uim1) at (0, 0) {\(u_{i-1}\)};
\node[draw, circle, fill=blue!30, minimum size=1.5cm] (ui) at (3, 0) {\(u_i\)};
\node[draw, circle, fill=blue!30, minimum size=1.5cm] (uip1) at (6, 0) {\(u_{i+1}\)};

\draw[->, line width=1.5pt] (uim1) .. controls (1.5, 0.5) .. (ui) 
    node[midway, above] {\(c_{i i-1}\)};
\draw[<-, line width=1.5pt] (uim1) .. controls (1.5, -0.5) .. (ui) 
    node[midway, below] {\(c_{i-1i}\)};

\draw[->, line width=1.5pt] (ui) .. controls (4.5, 0.5) .. (uip1) 
    node[midway, above] {\(c_{i+1 i}\)};
\draw[<-, line width=1.5pt] (ui) .. controls (4.5, -0.5) .. (uip1) 
    node[midway, below] {\(c_{i i+1}\)};

\end{tikzpicture}
\end{center}
Here, \( u_i \) represents the population at patch \( i \), and \( c_{ij} \, (c_{i \leftarrow j}) \) denotes the migration rate or departure probability from patch \( j \) to patch \( i \). The changes in the population at patch \( i \), are governed by the following equation:
\begin{equation*}
    \dot{u}_i = c_{i i-1} u_{i-1} + c_{i i+1} u_{i+1} - c_{i-1 i} u_i - c_{i+1 i} u_i.
\end{equation*}
Let $v_i = g *_t u_i$ represent the accumulated information of the population at patch \( i \) over all past time, where \( g \) is a weighting function that assigns importance to information based on its temporal proximity.  

\subsubsection{Diffusion-Advection model}

The movement of individuals is influenced by temporal information both in their current patch and in their destination patch. We consider the following two cases for different forms of temporary information utilization.

\textbf{Case 1.}  
Let \( \gamma(v_j - v_i) \) represent the influence of accumulated temporal information from the vicinity. The departing probabilities are then defined as  
\begin{equation*}
    c_{ij} = p_1 + \gamma(v_j - v_i),
\end{equation*}
where \( \gamma \) is an odd function, and \( p_1 \) represents the diffusion-driven component. If \( \gamma \) is an increasing function, it indicates that animals tend to move away from regions of higher population density. Conversely, if \( \gamma \) is a decreasing function, it signifies aggregation behavior, where individuals move toward areas of higher population density.
The population dynamics at patch \( i \) are then governed by
\begin{equation*}
\begin{aligned}
    \dot{u}_i &= p_1(u_{i-1} + u_{i+1} - 2 u_i) + \gamma(v_{i+1} - v_i) u_{i+1} - \gamma(v_i - v_{i-1}) u_{i-1} \\
    &\quad + (\gamma(v_{i+1} - v_i) - \gamma(v_i - v_{i-1})) u_i.
\end{aligned}
\end{equation*} This can be rewritten as
\begin{equation*}
\begin{aligned}
    \dot{u}_i &= p_1(u_{i-1} + u_{i+1} - 2 u_i) + \gamma(v_{i+1} - v_i)(u_{i+1} + u_i) - \gamma(v_i - v_{i-1})(u_{i-1} + u_i).
\end{aligned}
\end{equation*}
The corresponding continuous differential equation is
\begin{equation*}
    u_t = p_1 \Delta u + 2 \gamma'(0) \nabla \cdot (u \nabla v).
\end{equation*}
Note that only $\gamma'(0)$ is needed, and we don't have more requirements for $\gamma$.
Let \( D = p_1 \) and \( \alpha = 2 \gamma'(0) \). Substituting \( v = g *_t u \), we obtain
\begin{equation*}
    u_t = D \Delta u + \alpha \nabla \cdot (u \nabla \left(g*_t u\right)).
\end{equation*}
\textbf{Case 2.}  
Alternatively, if the temporal information is represented by \( \gamma(v_i) - \gamma(v_{i-1}) \), then the departure probabilities are given by
\begin{equation*}
    c_{ij} = p_1 + \gamma(v_j) - \gamma(v_i).
\end{equation*}
It follows that
\begin{equation*}
\begin{aligned}
    \dot{u}_i &= p_1(u_{i-1} + u_{i+1} - 2 u_i) + (\gamma(v_{i+1}) - \gamma(v_i))u_{i+1} - (\gamma(v_{i}) - \gamma(v_{i-1}))u_{i-1} \\
    & + (\gamma(v_{i+1}) + \gamma(v_{i-1}) - 2\gamma(v_i)) u_i.
    \end{aligned}
\end{equation*}
Let $D=p_1$, this is equivalent to 
\begin{equation*}
\begin{aligned}
    \dot{u}_i &= D(u_{i-1} + u_{i+1} - 2 u_i) + (\gamma(v_{i+1}) - \gamma(v_i))(u_{i+1} + u_i)- (\gamma(v_{i}) - \gamma(v_{i-1}))(u_i + u_{i-1}).
    \end{aligned}
\end{equation*}
The corresponding differential equation is
\begin{equation*}
    u_t = D \Delta u + \nabla \cdot ( u \nabla \gamma(v)) = D \Delta u + \nabla \cdot ( u \nabla \gamma(g *_t u)). 
\end{equation*} If $\gamma$ is linear, it includes the previous case. This is a diffusion advection type equation. From the derivation, it is evident that individuals navigate not by directly using the population density but by comparing the differences between the accumulated local population information and that of nearby regions. 
\subsubsection{Fickian type diffusion}
We assume that animals can adjust their movement strategies and match their dispersal rate with individuals in the vicinity based on their memory of social behavior. If the dispersal rates are identical between adjacent patches, i.e., \( c_{ij} = c_{ji} \), and let \( c_{i i+1} = c_{i+1 i}=:\gamma_{i+1/2}\), we can derive a Fickian diffusion model.  
For each \( i \)-th patch, individuals measure the historical population \( v_i \). When individuals from either patch \( i \) or patch \( i+1 \) move, they adopt the same dispersal rate \( \gamma_{i+1/2}=\gamma(v_{i+1/2}) \), determined by historical population information from the \( i \)-th and \( i+1 \)-th patches. For example
$\gamma(v_{i+1/2})= \gamma(\theta v_i +(1-\theta)v_{i+1})$ for some $\theta\in[0,1]$. Then it follows that
\begin{equation*}
    \dot{u}_i = \gamma(v_{i+1/2})(u_{i+1} - u_i) - \gamma(v_{i-1/2})(u_i - u_{i-1}).
\end{equation*}
The corresponding continuous differential equation is 
\begin{equation*}
    u_t = \nabla\cdot (\gamma(v)\nabla u) = \nabla\cdot \left(\gamma\left(g *_t u \right)\nabla u\right).
\end{equation*} 
Notice that this model uses solely a quantity of memory, while a diffusion advection case requires a gradient of memory. By matching the dispersal rate over the vicinity, the whole movement structure loses its biased movement from the macroscopic model.
\subsubsection{Fokker-Planck type diffusion}
If animals decide their movement based solely on local temporal information, and the departing probabilities are identical for each patch, i.e. $c_{i-1i} = c_{i+1i}$.
Denote the dispersal probabilities as \( c_{i-1i} = c_{i+1i} = \gamma(v_i) \), the population dynamics at patch \( i \) are described by
\begin{equation*}
    \dot{u}_i = \gamma(v_{i+1}) u_{i+1} - 2\gamma(v_i) u_i + \gamma(v_{i-1}) u_{i-1}.
\end{equation*}
The corresponding continuous form of this equation is
\begin{equation*}
    u_t =\Delta (\gamma(v) u) =\Delta (\gamma(g *_t u) u).
\end{equation*}
This Fokker-Planck diffusion model uses a quantity of memory.
In the Fickian diffusion case, animals match their dispersal rates with adjacent patches, resulting in symmetric movement. However, in this model, naturally biased movement becomes evident when the equation is expanded:
\begin{equation*}
    u_t = \Delta (\gamma(g *_t u) u) = \nabla \cdot \left(\nabla \gamma(g *_t u) u + \gamma(g *_t u) \nabla u\right),
\end{equation*}
as it accounts for the gradients of historical population information.

\subsection{Discrete velocity jump process in one-dimensional space}
Random-walk models with heterogeneity have been extensively analyzed to derive specific diffusion operators \citep{YJK2019, HYkim2021, Othmer2004}, and their approximations have long been foundational for describing responses of biological organisms to landscape heterogeneity \citep{Hillen2002}. These models account for behavioral dynamics such as changes in movement speed and turning frequencies \citep{kareiva1987swarms, turchin1991translating}.
To be specific, the system
\begin{equation}
\label{eq: simple transport equation}
    \begin{aligned}
        \frac{\partial u^+}{\partial t} + c \frac{\partial u^+}{\partial x} & = - \lambda^+ u^+ + \lambda^- u^-, \\
        \frac{\partial u^-}{\partial t} - c \frac{\partial u^-}{\partial x} & =  \lambda^+ u^+ - \lambda^- u^-.
\end{aligned}
\end{equation} demonstrates the discrete velocity jump process in one-dimensional space. \( u^+ \) denote the population moving to the right and \( u^- \) denote the population moving to the left. The total population is \( u^+ + u^- \). It is assumed that individuals move at a constant speed \( c \). The symbols \( \lambda^+ \) and \( \lambda^- \) represent the turning frequency for individuals initially moving to the right (left) and subsequently turning to the left (right), respectively.
In the simplest case, where individuals lack the ability to use information from their surroundings, then the turning frequencies \( \lambda^+ \) and \( \lambda^- \) are constants and their movement is a random walk on a homogeneous environment, such as heat diffusion. However, most animals are able to adjust their movement based on social group information, which is typically accumulated over time. This process is modeled using the convolution \( g *_t u\) with the weighting function \( g(t) \) since we consider the continuous time domain.

\subsubsection{Diffusion-Advection model}

 We assume that $u^+$ and $u^-$ are mechanistic particles that can change their direction using an explicit memory $g*_tu$ with a rate function $\gamma$. To be specific, they are assumed to have an ability to measure and use the gradient $\gamma(g*_tu)_x$ when they decide directions. For the aggregation situation, we take a decreasing $\gamma$, while an increasing $\gamma$ is taken for the repulsion case.
This behavior modifies the transport equations \eqref{eq: simple transport equation} as follows:
\begin{align}
\label{eq: transport 1}
        \frac{\partial u^+}{\partial t} + \frac{c}{\varepsilon} \frac{\partial u^+}{\partial x} & = - \frac{1}{\varepsilon^2}\big(\lambda_1 + \lambda_2 \gamma(g *_t u)_x\big) u^+ + \frac{1}{\varepsilon^2} \big(\lambda_1 - \lambda_2 \gamma(g *_t u)_x\big) u^-, \\
\label{eq: transport 2}
        \frac{\partial u^-}{\partial t} - \frac{c}{\varepsilon} \frac{\partial u^-}{\partial x} & =  \frac{1}{\varepsilon^2} \big(\lambda_1 + \lambda_2 \gamma(g *_t u)_x\big) u^+ - \frac{1}{\varepsilon^2} \big(\lambda_1 - \lambda_2 \gamma(g *_t u)_x\big) u^-,
\end{align}
where \( \lambda_1>0 \) represents the constant tumbling frequency, and \( \lambda_2 \gamma(g *_t u)_x \) denotes the tumbling frequency depending on the gradient $\gamma(g *_t u)_x $ together with a constant $\lambda_2>0$, influenced by all accumulated information from past time. 
Some research has also studied particle movements influenced by turning frequencies. \citep{Othmer2004} proposed a turning frequency containing the evolution of the intracellular state. \citep{eftimie2007complex} discussed animal communication mechanisms based on attraction, repulsion, and alignment.
In our case, a particle moves and reverses its direction according to a Poisson process with the turning frequency 
\begin{equation*}
    \lambda^\pm = \lambda_1 \pm \lambda_2 \gamma(g *_t u)_x.
\end{equation*} 
Denote
\begin{equation*}
    \begin{aligned}
        u &= u^+ + u^-, \\
        j &= u^+ - u^-.
    \end{aligned}
\end{equation*}
Adding equations \eqref{eq: transport 1} and \eqref{eq: transport 2} implies
\begin{equation}
\label{eq:transport3}
     \frac{\partial u }{\partial t} + \frac{c}{\varepsilon} \frac{\partial j}{\partial x} = 0.
\end{equation}
Taking the derivative with respect to \( t \), we obtain
\begin{equation}
     \label{eq: transport 4}
     \frac{\partial^2 u}{\partial t^2} + \frac{c}{\varepsilon} \frac{\partial^2 j}{\partial t \partial x} = 0.
\end{equation}
Subtracting \eqref{eq: transport 2} from \eqref{eq: transport 1} implies
\begin{equation*}
     \begin{aligned}
     \frac{\partial j }{\partial t} + \frac{c}{\varepsilon} \frac{\partial u}{\partial x} & = -\frac{2 \lambda_1}{\varepsilon^2} j - \frac{2 \lambda_2}{\varepsilon^2}  u\gamma(g*u)_x.
     \end{aligned}
\end{equation*}
Taking the derivative with respect to \( x \), we obtain
\begin{equation}
     \label{eq: transport 5}
     \begin{aligned}
     \frac{\partial^2 j }{\partial x\partial t} + \frac{c}{\varepsilon} \frac{\partial^2 u}{\partial x^2} = - \frac{2 \lambda_1}{\varepsilon^2} \frac{\partial j}{\partial x} - \frac{2 \lambda_2}{\varepsilon^2} \frac{\partial (u\gamma(g*u)_x)}{\partial x}.
     \end{aligned}
\end{equation}
Combining \eqref{eq: transport 4} and \eqref{eq: transport 5}, we find u satisfies the following wave equation
\begin{equation*}
    \begin{aligned}
     \frac{\partial^2 u}{\partial t^2} - \frac{c^2}{\varepsilon^2} \frac{\partial^2 u}{\partial x^2} & = - \frac{2 \lambda_1}{\varepsilon^3} \frac{\partial u}{\partial t} + \frac{2 \lambda_2 c }{\varepsilon^3} \frac{\partial (u\gamma(g*u)_x)}{\partial x},
     \end{aligned}
\end{equation*}
where \( c \frac{\partial j}{\partial x} = - \frac{\partial u}{\partial t} \) \eqref{eq:transport3} is used.
It follows that
\begin{equation*}
\label{eq: }
    \begin{aligned}
     \frac{\varepsilon^3}{2 \lambda_1}\frac{\partial^2 u}{\partial t^2} +  \frac{\partial u}{\partial t}  &= \frac{\varepsilon c^2 }{2 \lambda_1} \frac{\partial^2 u}{\partial x^2}  + \frac{\lambda_2 c}{\lambda_1}   \frac{\partial (u\gamma(g*u)_x)}{\partial x}.
     \end{aligned}
\end{equation*}
Let \( \lambda_1 \to 0 \), \( \lambda_2 \to 0 \), \( \varepsilon \to 0 \) and suppose that \( \frac{\varepsilon c^2 }{2 \lambda_1} \to D\sim O(1) \) and \( \frac{\lambda_2 }{\lambda_1}\to K \sim O(1) \), where $K>0$ is some constant. Assume that $u = u^++u^-$ is bounded, then the equation is reduced to
\begin{equation*}
    \begin{aligned}
     \frac{\partial u}{\partial t}  &= D \frac{\partial^2 u}{\partial x^2}  +\frac{\partial (u\gamma(g*u)_x)}{\partial x},
     \end{aligned}
\end{equation*}
where the constant \( cK\) is absorbed into the function \( \gamma \). The corresponding higher-dimensional version is 
\begin{equation*}
    \begin{aligned}
     \frac{\partial u}{\partial t}  &= D \Delta u + \nabla \cdot \left(u\nabla \gamma\left(g *_t u\right)\right).
     \end{aligned}
\end{equation*}
The first term \( D\Delta u \) represents random diffusion, and the second term \( \nabla \cdot \left(u\nabla \gamma\left(g *_t u\right)\right) \) represents directed movement influenced by the gradient of temporal information collection.




\subsubsection{Fickian type and Fokker-planck type diffusion models}
Following the method in \cite{HYkim2021}, we consider a model with the tumbling frequency $\lambda(g*_tu)$ and speed $v(g*_tu)$:
\begin{align*}
        \frac{\partial u^+}{\partial t} +  \frac{1}{\varepsilon}\frac{\partial }{\partial x} (v(g*_tu)u^+)& = -\frac{1}{2\varepsilon^2} \lambda(g *_t u)( u^+ -u^-), \\
        \frac{\partial u^-}{\partial t} - \frac{1}{\varepsilon}\frac{\partial}{\partial x} (v(g*_tu)u^-) & =  \frac{1}{2\varepsilon^2} \lambda(g *_t u)( u^+ -u^-).
\end{align*} The parameter $\eps$ is introduced for parabolic scaling limit. This is because we want to see a diffusion-scale movement behavior through the transport equations.  Note that the speed $v$ and the tumbling frequency $\lambda$ are determined by the memory quantity $g*_tu$, not its gradient.
As we did in the diffusion advection case, by denoting the flux $j=\frac{v(u^+-u^-)}{\varepsilon}$, we obtain 

\begin{equation*}
\begin{aligned}
    &\frac{\partial u}{\partial t} = -\frac{\partial}{\partial x}j, \text{ and}\\
    &\frac{\varepsilon^2}{v(g*_tu)}\frac{\partial j}{\partial t} +\varepsilon^2j\frac{\partial}{\partial t}\left(\frac{1}{v(g*_tu)}\right) + \frac{\partial}{\partial x}(v(g*_tu)u) = -\frac{\lambda(g*_tu)}{v(g*_tu)}j.
\end{aligned}
\end{equation*} If we assume that $\gamma$ and $v$ are bounded away from zero and bounded above, and assume that $u^i$ are bounded, one can easily prove that $\frac{\varepsilon^2}{v(g*_tu)}\frac{\partial j}{\partial t} +\varepsilon^2j\frac{\partial}{\partial t}\left(\frac{1}{v(g*_tu)}\right)\to 0$. Then we can derive the following equation
\begin{equation*}
    \frac{\partial u}{\partial t} = \frac{\partial}{\partial x}\left(\frac{v(g*_tu)}{\lambda(g*_tu)}\frac{\partial}{\partial x}\left(v(g*_tu)u\right)\right).
\end{equation*} If $v\equiv 1$ and $\lambda(s) =\gamma(s)^{-1}$, it reads a Fickian type diffusion
\begin{equation*}
    \frac{\partial u}{\partial t} = \frac{\partial}{\partial x}\left(\gamma(g*_tu)\frac{\partial u}{\partial x}\right).
\end{equation*} If $v(s) = \gamma(s)$ and $\lambda(s) = \gamma(s)$, then it reads a Fokker-Planck type diffusion 
\begin{equation*}
    \frac{\partial u}{\partial t} = \frac{\partial^2}{\partial x^2}\left(\gamma(g*_tu)u\right).
\end{equation*}
The Fickian type considers the tumbling frequency dependent on memory, while the movement speed remains independent. In contrast, the Fokker-Planck type accounts for both the tumbling rate and the movement speed being influenced by memory.
If $\gamma$ increases, the speed and the tumbling frequency also increase.

\section{Numerical simulation}
\label{sec:Numerical simulation}
We now conduct numerical simulations to visualize the influence of temporal information on animal movement and observe differences between three different diffusion mechanisms with explicit memory. 

The temporal weighting function \( g(t) \) characterizes the dependence of memory on past information. Here, we consider the Gamma distribution. Specifically, the Gamma distribution function of order \( k \) (where \( k \in \mathbb{N} \cup \{0\} \)) is defined as
\begin{equation}
\label{eq: gamma kernel}
g_k(t) = \frac{t^k e^{-t/\tau}}{\tau^{k+1} \Gamma(k+1)}.
\end{equation}
Two special cases of the Gamma kernel, commonly analyzed, are referred to as the weak kernel and the strong kernel \citep{shi2021spatial}. When \( k = 0 \), the kernel is termed the weak kernel. This kernel is strictly decreasing with \( t \), reflecting the biological phenomenon of fading memory over time, where older information becomes less significant. In contrast, when \( k > 0 \), the kernel is called the strong kernel. The maximum extremum occurs at \( k\tau \). This kernel initially increases over \( [0, k\tau) \) and then decreases over \( (k\tau, \infty) \), representing two key biological phases: knowledge acquisition followed by memory decay, as illustrated in Figure \ref{fig:gamma}.
The parameters \( k \) and \( \tau \) determine the average delay and the extent of memory, which are critical to quantifying the effects of spatial memory on dynamics. The mean and variance of \( g_k(t) \) are given by
\[
\mathbb{E}(g_k(t)) = (k+1)\tau, \quad \text{and} \quad \mathbb{V}\mathrm{ar}(g_k(t)) = (k+1)\tau^2.
\]
As the parameters \( k \) or \( \tau \) increase, the temporal kernel \( g_k(t) \) emphasizes information from earlier time points and encompasses a longer duration of past information, as shown in Figure \ref{fig:gamma}. For the subsequent analysis, we set \( \tau = 0.5 \) and examine both the weak kernel (\( k = 0 \)) and the strong kernel (use $k=5$ as an example). 

\begin{figure}[h!]
    \centering
    \begin{subfigure}[t]{0.5\linewidth}
        \centering
        \includegraphics[width=\linewidth]{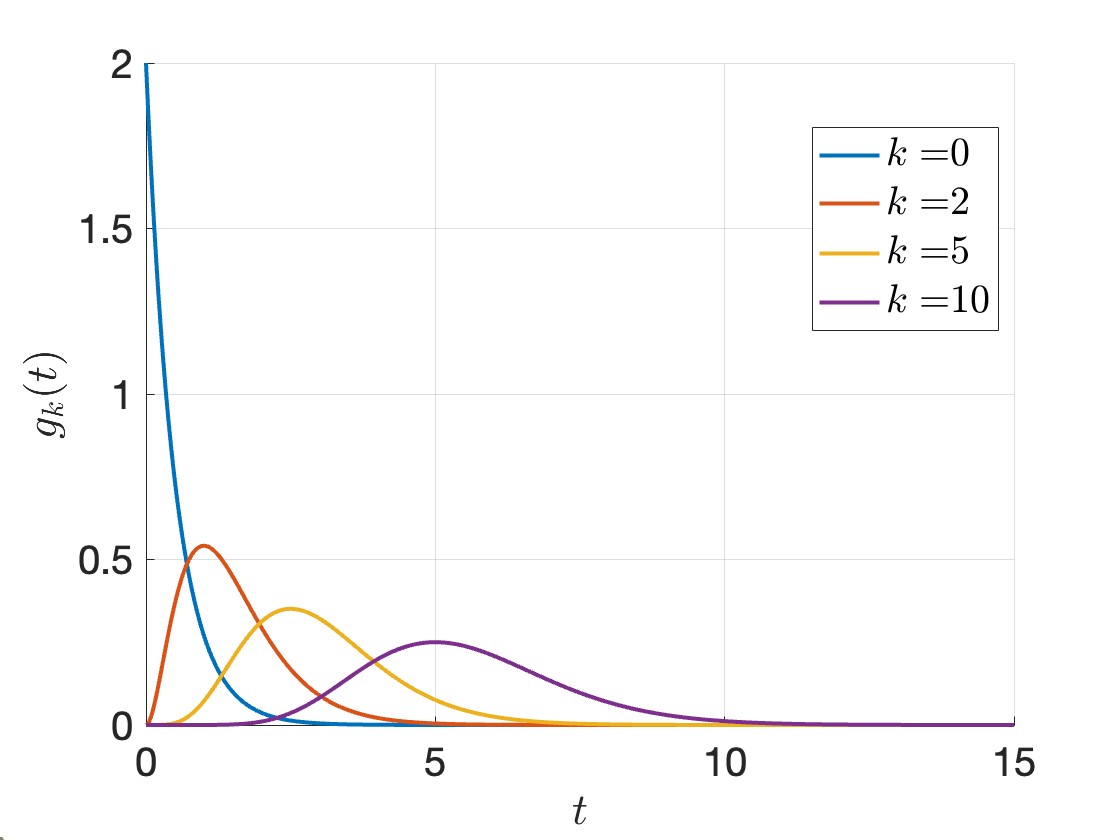}
        \caption{}
        \label{fig:gamma_k.jpg}
    \end{subfigure}%
        \begin{subfigure}[t]{0.5\linewidth}
        \centering
        \includegraphics[width=\linewidth]{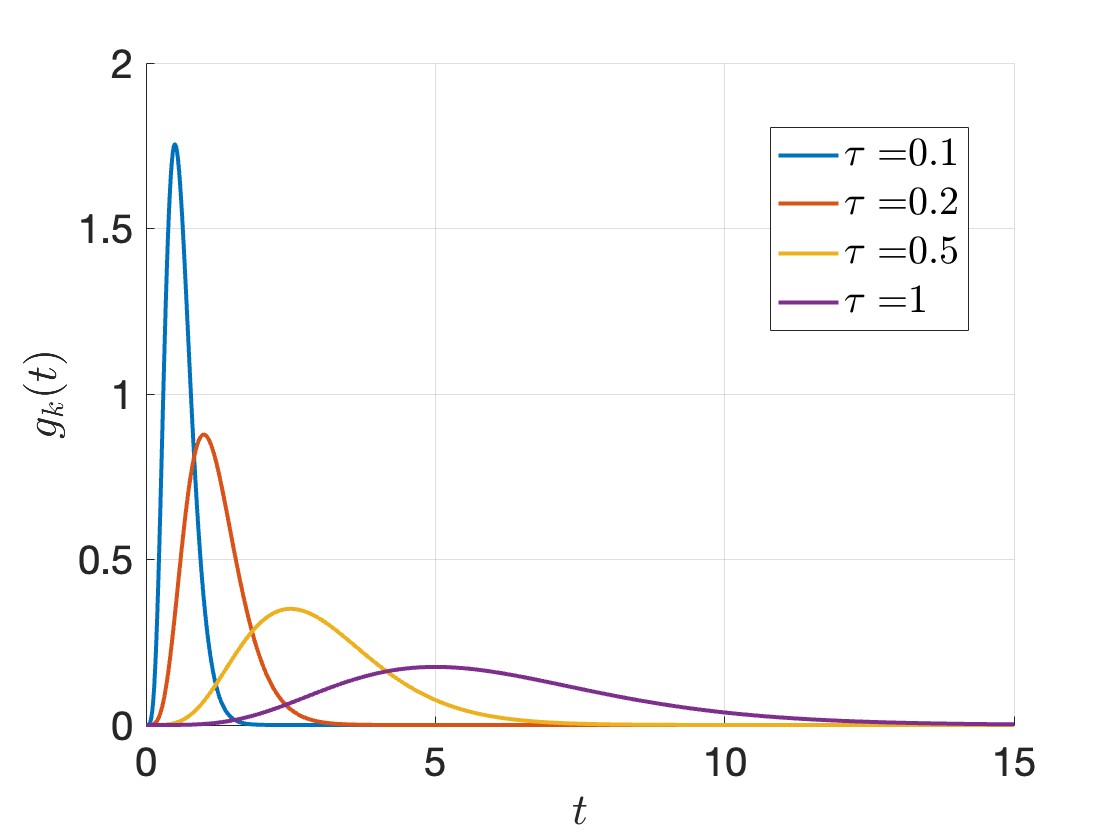}
        \caption{}
        \label{fig:gamma_tau.jpg}
    \end{subfigure}%
    \caption{Temporal kernel functions \( g_k(t) \) over the interval \([0,15]\): (a) varying \( k \) with fixed \( \tau = 0.5 \); (b) varying \( \tau \) with fixed \( k = 5 \).}
    \label{fig:gamma}
\end{figure}

We now investigate the movement driven by temporal distributed memory. The governing equations are given as
\begin{equation}
\label{eq: equations for numerics}
\begin{aligned}
    &u_t = Du_{xx} + (u \gamma\left(g*_t u\right)_x)_x,\\
    &u_t = (\gamma(g*_t u)u_x)_x,\\
    &u_t = (\gamma(g*_t u)u)_{xx},
\end{aligned}
\end{equation}
where \( g \) is given in equation \eqref{eq: gamma kernel}. 
Those equations are the diffusion advection model, the Fickian diffusion model, Fokker-Planck diffusion model, respectively. If we rewrite the equations in the general form, we have
\begin{equation*}\label{eq: oned generalized}
    u_t = (D(g*_tu)u_x + A(g*_tu)u(g*_tu)_x)_x.
\end{equation*}
We take the homogeneous Neumann boundary condition for numerical simulations to accurately capture the dispersal behavior. The well-posedness of the generalized equation with the corresponding boundary condition, presented in Appendix \ref{app:wellposedness}, serves to validate the three derived models \eqref{eq: equations for numerics} mathematically.

We consider two different scenarios.
\begin{itemize}
    \item [(1)] (Memory-induced repulsion) The case where $A(s)>0$.  Individuals avoid the crowded region based on their accumulated memory. For the three examples \eqref{eq: equations for numerics}, it is the case when $\gamma$ is a increasing function. 
    \item [(2)] (Memory-induced attraction) The case where $A(s)<0$. Individuals move toward the crowded region based on their accumulated memory. For the three examples \eqref{eq: equations for numerics}, it is the case when $\gamma$ is a decreasing function. 
\end{itemize}

  \subsection{The case of memory-induced repulsion: $A(s)>0$}
  In the case when $A(s)>0$ with the weak kernel $g(t) = \tau^{-1}e^{-t/\tau}$, any constant steady-states $\bar u$ is linearly stable from Lemma \ref{lem: stability}. Indeed,
  \[\chi(\bar u) = -\frac{A(\bar u) \bar u}{D(\bar u)}<0. \] Even if we consider a strong kernel, we choose the same $D$, $A$, and $\bar u$ such that $\chi$ has the same value as in the weak kernel case.
  
  Let an initial function be given by
  \[ u(x,0) = 0.3*\sin(\pi(x+0.5)) + 1.5,\] 
so its average is $\bar u= \frac{1}{|\Omega|}\int_\Omega u_0dx=1.5$. Since we want to see the effect of temporal memory and its result on an individual's directed movement, we consider the fixed ratio $|\chi|\approx 2$ for all three cases, so the advection occurs at a faster rate than the diffusion.
Therefore, we take $D$ and $A$ for the ratio $\chi$ around the average of the initial function $\bar u = 1.5$ to be approximately near 2.
For the diffusion advection equation, we take $D=1.5, \alpha=2$, so $|\chi|=\frac{a}{D}\bar u = 2$. For both the Fickian type and Fokker-Planck type diffusion, we take $\gamma(s) = 0.5(s+0.05)^2+0.02$. In the case of Fokker-Planck case, we have $|\chi(\bar u)|=\gamma'(\bar u)\bar u/\gamma(\bar u)\approx 1.9$. We conduct simulations of equations using two distinct Gamma distributions: a weak kernel (\( k = 0 \)) and a strong kernel (\( k = 5 \)). In the weak kernel scenario, systems with all three diffusion types rapidly converge to a uniform stable state; the Fokker Planck diffusion exhibits a slight delay but also transitions to uniformity swiftly (see Figure \ref{fig:positiveA_k0}). In the strong kernel case, we observe time delay phenomena with a wiggle pattern at the beginning that gradually fades away, in all models except for the Fickian diffusion model.


  \begin{figure}[h!]
    \centering
    \begin{subfigure}[t]{0.33\linewidth}
        \centering
        \includegraphics[width=\linewidth]{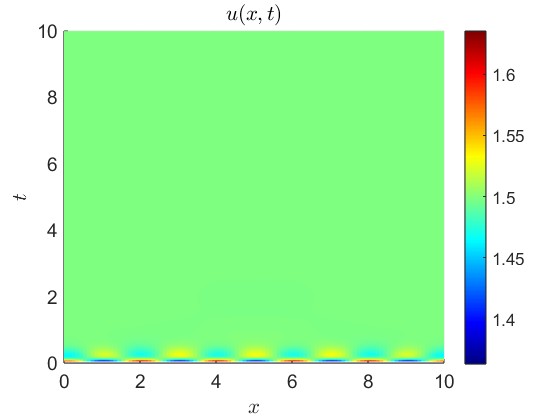}
        \caption{}
        \label{fig:diffadv_nosource_k0.jpg}
    \end{subfigure}%
\hfill
    \begin{subfigure}[t]{0.33\linewidth}
        \centering
        \includegraphics[width=\linewidth]{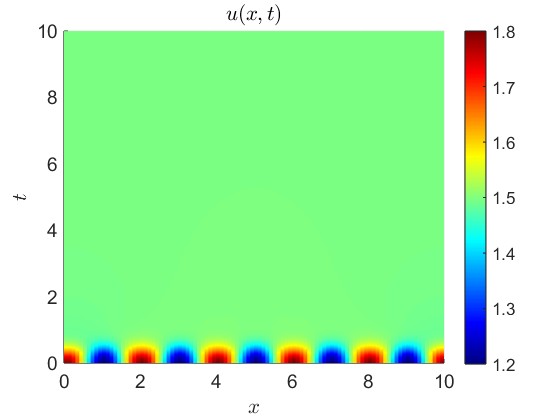}
        \caption{}
        \label{fig:Fick_nosource_k0.jpg}
    \end{subfigure}%
    \hfill
    \begin{subfigure}[t]{0.33\linewidth}
        \centering
        \includegraphics[width=\linewidth]{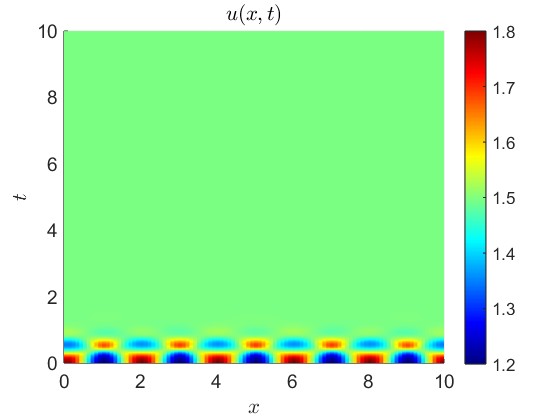}
        \caption{}
        \label{fig:FP_nosource_k0.jpg}
    \end{subfigure}
    
\caption{Spatial temporal distribution of population density with a weak memory kernel (\(k=0\)) for $A(s) >0$ case: (a) diffusion advection type; (b) Fickian type diffusion; (c) Fokker Planck type diffusion.}
    \label{fig:positiveA_k0}
    \end{figure}
    

\begin{figure}[h!]
    \centering
    \begin{subfigure}[t]{0.33\linewidth}
        \centering
        \includegraphics[width=\linewidth]{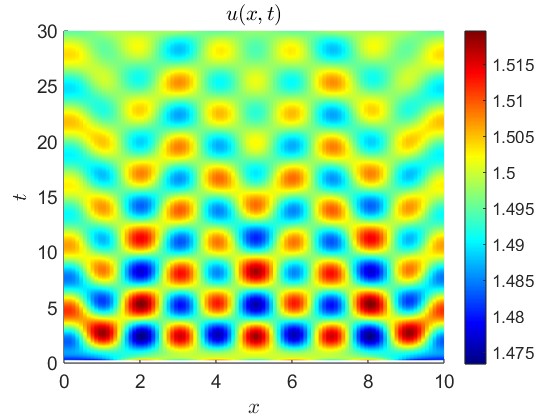}
        \caption{}
        \label{fig:diffadv_nosource_k5.jpg}
    \end{subfigure}%
\hfill
    \begin{subfigure}[t]{0.33\linewidth}
        \centering
        \includegraphics[width=\linewidth]{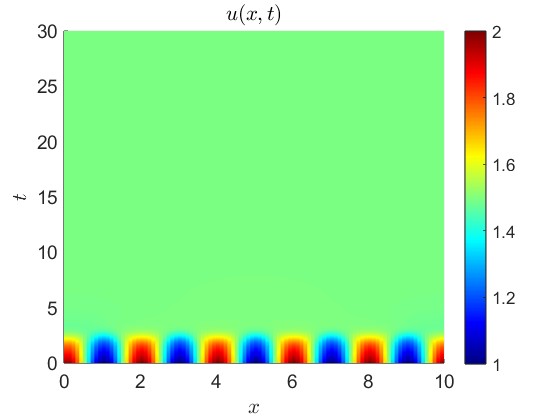}
        \caption{}
        \label{fig:Fick_nosource_k5.jpg}
    \end{subfigure}%
    \hfill
    \begin{subfigure}[t]{0.33\linewidth}
        \centering
        \includegraphics[width=\linewidth]{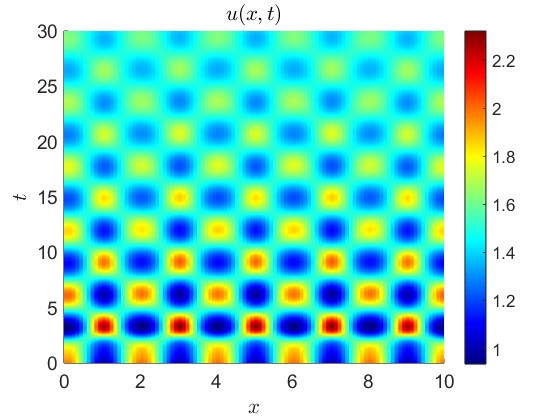}
        \caption{}
        \label{fig:FP_nosource_k5.jpg}
    \end{subfigure}
    
\caption{Spatial temporal distribution of population density with a strong memory kernel (\(k=5\)) for $A(s) >0$ case: (a) wiggle pattern for the diffusion advection equation; (b) Fickian type diffusion; (c) wiggle pattern for Fokker-Planck type diffusion.}
    \label{fig:positiveA_k5}
    \end{figure}
    
\subsection{The case of memory-induced attraction: $A(s)<0$}
In the case when $A(s)<0$ with a weak kernel, as we can see from Lemma \ref{lem: stability} it is possible that $\chi(\bar u) >1$ depending on the choice of parameters, so we expect aggregation phenomena. As they use accumulated information to move toward high-density regions, we expect that they form aggregates. 

In this case, the kernel function 
\(g\) influences only the speed of aggregate formation.
In this subsection, we take the weak kernel $g(t) = \tau^{-1}e^{-t/\tau}$ with $\tau =0.5$, which allows us to use Lemma \ref{lem: stability} for choice of parameters for each equation based on linear stability results (see Appendix \ref{app:linear stability}).

Let an initial function be given by
  \[ u(x,0) = 0.3*\sin(\pi(x+0.5)) + 1,\] 
so $\bar u =1$. As we did in the memory-dependent repulsion case, we take the ratio $\chi\approx 1.3$.
For the diffusion advection equation, we take $D=1.5, \alpha=-2$, so $\chi=-\frac{\alpha}{D}\bar u = 1.3$. For both the Fickian type and Fokker-Planck type diffusion, we take $\gamma(s) = \frac
{0.5}{(s+0.5)^2}+0.01$. In the case of Fokker-Planck diffusion, we have $\chi(\bar u)=-\gamma'(\bar u)\bar u/\gamma(\bar u)\approx 1.3$.

\begin{figure}[h!]
    \centering
    \begin{subfigure}[t]{0.33\linewidth}
        \centering
        \includegraphics[width=\linewidth]{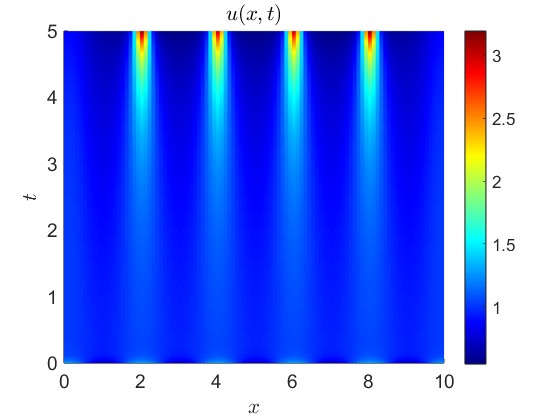}
        \caption{}
        \label{fig:diffadv_nosource_k0_blowup.jpg}
    \end{subfigure}%
\hfill
    \begin{subfigure}[t]{0.33\linewidth}
        \centering
        \includegraphics[width=\linewidth]{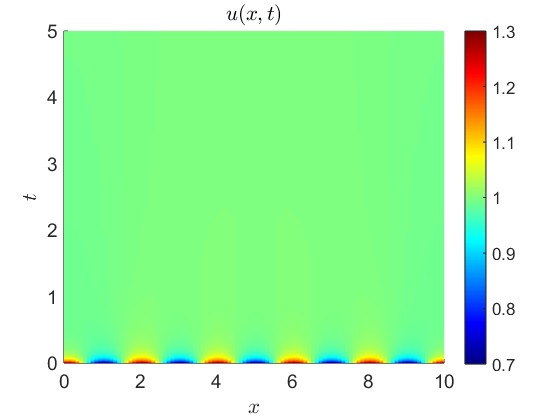}
        \caption{}
        \label{fig:Fick_nosource_k0_aggregation.jpg}
    \end{subfigure}%
    \hfill
    \begin{subfigure}[t]{0.33\linewidth}
        \centering
        \includegraphics[width=\linewidth]{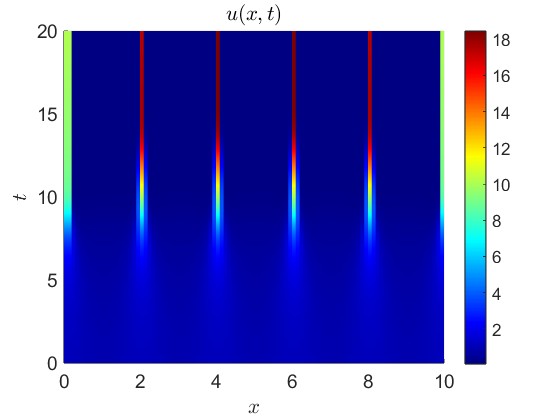}
        \caption{}
        \label{fig:FP_nosource_k0_aggregation.jpgb}
    \end{subfigure}
    
\caption{Spatial temporal distribution of population density for the case \( A(s) < 0 \): (a) diffusion advection type; (b) Fickian type diffusion; (c) Fokker Planck type diffusion. The peaks of the initial function grow over time, while the total population remains conserved.}
    \label{fig:singleaggregation}
    \end{figure}
    
As illustrated in Figure \ref{fig:singleaggregation}, collective behavior is observed from the diffusion advection equation and the Fokker-Planck type diffusion model because there is no advection term induced by temporal memory about population density.

\section{Discussion}
Memory plays a crucial role in determining where and how animals move and adjusting their dispersal strategies over time. Diffusion models are a key tool in population ecology for studying animal movement. In this study, we investigate how the use of explicit memory in different ecological dispersal scenarios leads to distinct mathematical models. 

The resulting models correspond to three types of diffusion: diffusion advection, Fickian diffusion, and Fokker-Planck diffusion, each representing a distinct movement strategy: gradient-based movement, environment matching, and location-based movement, respectively. Specifically, when animals rely on the memory of spatial gradients from historical population data, their movement is captured by the advection-diffusion equation. Matching their dispersal rate to that of individuals in the surrounding environment results in Fickian diffusion. Finally, when movement decisions are based solely on the memory of local historical information, with a departure rate determined by this local memory, the behavior aligns with Fokker-Planck diffusion.

These models are systematically derived by using three approaches: discrete velocity jump processes, time-space discretization, and patch-based models. 
These derivations highlight the connections between ecological dispersal strategies and their mathematical formulations.

Driven by distinct mechanisms, animals may avoid crowded regions based on accumulated memory or move toward them \citep{ford2013two,matthysen2005density,cressman2011effects}. We numerically analyzed these behaviors using three commonly employed diffusion models: diffusion advection, Fickian, and
 Fokker Planck types. In memory-dependent repulsion with a weak kernel, all models converged to a stable state, with the Fokker Planck model displaying a slight delay before
 achieving uniformity. For a strong kernel, wiggle patterns were observed in all models,
 while the Fokker Planck model demonstrated the most pronounced pattern. In memory-dependent attraction, distinct collective behaviors emerged except the Fickian type, with
 varying aggregation speeds across the models and kernels. 
 
To validate the derived framework, we also mathematically analyzed the existence of the solution to a general animal movement model \eqref{eq: general model no f} that unifies our three movement models. Additionally, we explored the possibility of finite-time blow-up phenomena by considering a special case with a weak memory kernel. Existing studies on accumulated explicit memory models are limited, and this work addresses a critical gap in the literature, as previous memory models primarily focused on equilibrium properties without exploring solution existence \citep{shi2021spatial,ji2024stability}.



This framework can be readily extended to more complex scenarios. For example, incorporating population dynamics into \(\eqref{eq: main equation}\) yields:  
\[
u_t = \nabla \cdot \left(D(g *_t u)\nabla u + A(g *_t u) u \nabla(g *_t u)\right) + f(u),
\]  
where \(f(u)\) is a reaction term. For instance, with intraspecific competition, it can be \(f(u) = u(1 - u)\). Additionally, if movement is influenced by a spatially and temporally heterogeneous source \(m(x,t)\), \(\eqref{eq: main equation}\) becomes:  
\[
u_t = \nabla \cdot \left(D(g *_t m)\nabla u + A(g *_t m) u \nabla(g *_t m)\right) + f(u,m),
\]  
where \(m(x,t)\) varies across space and time. By adjusting the signs of \(D\) and \(A\), the model can represent attraction or repulsion to various sources \(m\), such as food or toxins. The framework can accommodate multiple sources and interactions between multiple species. Although this study primarily focuses on one-dimensional space and temporal memory, higher spatial dimensions and spatial memory could be explored in future work to capture more complex ecological phenomena.


\section*{Acknowledgments}
The research of Hao Wang was partially supported by the Natural Sciences
and Engineering Research Council of Canada (Individual Discovery Grant RGPIN-2020-03911 and Discovery
Accelerator Supplement Award RGPAS-2020-00090) and the Canada Research Chairs program (Tier 1 Canada
Research Chair Award).

\appendix

\section{Well-posedness}
\label{app:wellposedness}

In this section, we prove the existence of the solution for a general model in one-dimensional space,
\begin{equation}
    \label{eq: general model no f}
    \left\{\begin{aligned}
        &u_t = \left(D(g*_t u)u_{x} + A(g*_t u)u \left(g*_t u\right)_x\right)_x \quad (x,t) \in [0,L]\times(0,T_{\max}]\\
        &  u_x(0,t)=u_x(L,t) = 0, \quad  t \in (0,T_{\max}]\\
        & u(x,0) = u_0(x) ,\quad x \in [0,L]\\
        &g*_t u = \int_0^t g(t-s)u(x,s)\, \mathrm{d}s,
    \end{aligned}\right.
\end{equation} 
By choosing functions \(D\) and \(A\), we can obtain all three diffusion types, as discussed in the Introduction section. 
For the analysis in this chapter, we always assume the following conditions on $D,A$ and $g$:
\begin{itemize}
    \item [\textbf{(A1)}] $D$, $A \in C^3[0, \infty)$ and $\inf_{s \geq 0} D(s) > 0$. 
    \item [\textbf{(A2)}] $g \in C [0,\infty) \cap L^1[0,\infty)$.
\end{itemize}

Denote the space \( X_{\alpha,T} = C^{\alpha, \frac{\alpha}{2}}(\bar{\Omega} \times [0, T]) \), for any \( \alpha \geq 0 \) and \( T > 0 \), with the corresponding seminorm \( [\cdot]_{\alpha,T} \) and norm \( \|\cdot\|_{\alpha,T} \). It follows that
\( X_{0,T} = C^0(\bar{\Omega} \times [0, T]) \) with the norm $
\| \cdot \|_{0,T} = \| \cdot \|_{C(\bar{\Omega} \times [0, T])}$. For definitions of these norms, please refer (Section 4.1 \citep{lieberman1996second})


\begin{lemma}
    \label{lem: g*v gammaT}
    If $g \in C([0,T])$ and $v \in X_{\gamma,T}$ for some $\gamma \in [0,1)$, then
    \begin{equation*}
        \left\|g *_t v\right\|_{\gamma,T} \leq  2T \left\|g\right\|_{C([0, T])} \left\|v\right\|_{\gamma,T}.
    \end{equation*}
\end{lemma}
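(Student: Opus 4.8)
The plan is to estimate the two ingredients of the Hölder norm $\|\cdot\|_{\gamma,T}$ separately, namely the sup-norm $\|g*_t v\|_{0,T}$ and the seminorm $[g*_t v]_{\gamma,T}$, and then combine them. Recall that $\|g*_t v\|_{\gamma,T} = \|g*_t v\|_{0,T} + [g*_t v]_{\gamma,T}$, where the seminorm controls Hölder continuity jointly in $x$ (exponent $\gamma$) and $t$ (exponent $\gamma/2$). For the sup-norm, I would write $(g*_t v)(x,t) = \int_0^t g(t-s)v(x,s)\,ds$ and bound the integrand crudely by $\|g\|_{C([0,T])}\|v\|_{0,T}$, so that integrating over $[0,t]\subseteq[0,T]$ gives a factor of $T$. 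This yields $\|g*_t v\|_{0,T}\le T\|g\|_{C([0,T])}\|v\|_{0,T}$, which already sits comfortably under the claimed bound.

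The main work is the seminorm estimate, and this is where I expect the only real obstacle. First I would handle spatial Hölder continuity: for fixed $t$ and two points $x,y$, subtract the convolutions to get $\int_0^t g(t-s)\,(v(x,s)-v(y,s))\,ds$, bound $|v(x,s)-v(y,s)|\le [v]_{\gamma,T}|x-y|^\gamma$ using the spatial part of the seminorm, and integrate to pick up a factor of $T\|g\|_{C([0,T])}[v]_{\gamma,T}$. The temporal Hölder continuity is the delicate part, because the upper limit of integration moves with $t$. For $t_1<t_2$ I would split
\begin{equation*}
(g*_t v)(x,t_2)-(g*_t v)(x,t_1)
=\int_0^{t_1} \bigl(g(t_2-s)-g(t_1-s)\bigr)v(x,s)\,ds
+\int_{t_1}^{t_2} g(t_2-s)v(x,s)\,ds.
\end{equation*}
The second (boundary) integral is straightforward: it is bounded by $\|g\|_{C([0,T])}\|v\|_{0,T}\,|t_2-t_1|$, and since $|t_2-t_1|\le T^{1-\gamma/2}|t_2-t_1|^{\gamma/2}$ on $[0,T]$ this contributes a term of the correct Hölder order. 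The genuinely awkward term is the first integral, since it requires Hölder control of $g$ in its argument, whereas the hypothesis only gives $g\in C([0,T])$, i.e. continuity without a modulus.

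To get around this, I would avoid differencing $g$ directly and instead reorganize the first integral by a change of variables so that the increment falls on $v$ rather than on $g$. Substituting $s\mapsto s-(t_2-t_1)$ in one copy recasts the difference as an integral of $g$ against an increment $v(x,s)-v(x,s')$ with $|s-s'|=|t_2-t_1|$, which the temporal part of $[v]_{\gamma,T}$ controls by $[v]_{\gamma,T}|t_2-t_1|^{\gamma/2}$; the leftover boundary pieces from shifting the limits are again of the easy boundary type handled above. Collecting the spatial seminorm term, the boundary term, and this reorganized temporal term, each carries a factor $T\|g\|_{C([0,T])}$ times a piece of $\|v\|_{\gamma,T}$. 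Adding the sup-norm estimate and absorbing the several $O(T\|g\|_{C([0,T])}\|v\|_{\gamma,T})$ contributions into a single constant gives the stated bound $\|g*_t v\|_{\gamma,T}\le 2T\|g\|_{C([0,T])}\|v\|_{\gamma,T}$, where the factor $2$ is the slack that accommodates the constants accumulated across the sup-norm and seminorm pieces.
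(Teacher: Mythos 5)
Your proposal is correct and follows essentially the same route as the paper: the paper also bounds the sup-norm by $T\|g\|_{C([0,T])}\|v\|_{0,T}$ and, for the seminorm, writes the convolution as $\int_0^t g(r)v(x,t-r)\,dr$ so that the temporal increment lands on $v$ rather than on $g$ (your change-of-variables reorganization is the same device), leaving only a boundary integral of length $|t_2-t_1|$ that is absorbed as you describe. The only cosmetic difference is that the paper treats the spatial and temporal increments jointly in a single estimate rather than separately.
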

\begin{proof}
    The case $\alpha=0$ is trivial. Now consider $0<\gamma<1$, for any $x,y \in \bar{\Omega}$, and $t,s \in [0,T]$, $t\ge s$, we have 
    \begin{equation*}
        \begin{aligned}
            &\quad \left| \left(g*_t v\right)(x,t) - (g *_s v)(y,s)\right|\\
            & = \left|\int_0^t g(r)v(x,t-r) \mathrm{d}r - \int_0^s g(r) v(y,s-r) \mathrm{d}r \right|\\
            & \leq \int_0^s |g(r)| \left|v(x,t-r) - v(y,s-r)\right|\mathrm{d}r + \int_0^{t-s} \left|g(t-r)\right|\left|v(x,r)\right| \mathrm{d}r\\
            & \leq \left\|g \right\|_{C([0,T])} \left( s [v]_{\gamma,T}\left(\left| x-y\right|^\gamma + \left|t-s\right|^{\frac{\gamma}{2}} \right) + (t-s) \left\|v\right\|_{0,T}\right)\\
            & \leq T \left\|g\right\|_{C([0,T])} \left\|v\right\|_{\gamma,T} \left(\left| x-y\right|^\gamma + \left|t-s\right|^{\frac{\gamma}{2}} \right). 
        \end{aligned}
    \end{equation*}
    That is 
    \begin{equation*}
        [g*_t v]_{\gamma,T} \leq T \left\|g\right\|_{C([0,T])} \left\|v\right\|_{\gamma,T}.
    \end{equation*}
    It follows that
    \begin{equation*}
        \left\|g*_t v\right\|_{\gamma,T} = \left\|g*_t v\right\|_{C([0,T])} + [g*_t v]_{\gamma,T}  \leq 2T \left\|g\right\|_{C([0,T])} \left\|v\right\|_{\gamma,T}.
    \end{equation*}
\end{proof}


We are ready to prove the local existence of \eqref{eq: general model no f}.
\begin{theorem}[Local existence]\label{thm:existence}
    Let $0<\gamma<1$, and suppose that $(A1)-(A2)$ hold. If $u_0\in C^{2+\gamma}[0,L]$ and $u_0\ge 0$ in $(0,L)$, then there exists $T_{\max}>0$ and unique nonnegative $u\in C^{2+\gamma,1+\gamma/2}([0,T_{\max}]\times[0,L])$ that  solves the system \eqref{eq: general model no f}, and satisfies
    \begin{equation}\label{eq: mass conservation}
        \int_\Omega u(\cdot,t) dx = \int_\Omega u_0dx
    \end{equation} for all $t\in [0,T_{\max}]$.
    In addition, we have 
    \begin{equation}\label{blowup}
    \text{ either } T_{\max}=\infty \text{ or } 
        \|u\|_{2+\gamma,t}\to \infty \text{ as } t\to T_{\max} \text{ for some } \gamma \in (0, 1).
    \end{equation} 
\end{theorem}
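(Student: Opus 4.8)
The plan is to establish local existence and uniqueness via a fixed-point argument in the Hölder space $X_{2+\gamma,T}$, treating the memory term $g*_t u$ as a known coefficient so that the quasilinear problem \eqref{eq: general model no f} decouples into a linear parabolic equation at each iteration. Concretely, I would define a map $\mathcal{F}$ on a closed ball $B_R \subset X_{2+\gamma,T}$ (with radius $R$ tied to $\|u_0\|_{C^{2+\gamma}}$) by sending a given $w$ to the solution $u = \mathcal{F}(w)$ of the \emph{linear} equation obtained by freezing the convolution argument: the diffusion coefficient $D(g*_t w)$ and the coefficients arising from $A(g*_t w)\,(g*_t w)_x$ are then fixed functions in $X_{\gamma,T}$. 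By Lemma \ref{lem: g*v gammaT}, $g*_t w$ inherits $\gamma$-Hölder regularity and its norm is controlled by $2T\|g\|_{C([0,T])}\|w\|_{\gamma,T}$; composing with $D,A \in C^3$ (assumption (A1)) keeps the coefficients Hölder continuous, and (A1) also guarantees uniform parabolicity since $\inf_s D(s)>0$. Schauder theory for linear parabolic equations with the Neumann boundary condition (e.g.\ Theorem in \citep{lieberman1996second}) then yields a unique $u \in X_{2+\gamma,T}$ together with the a priori Schauder estimate.

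The key steps, in order, are: (i) rewrite \eqref{eq: general model no f} in nondivergence or divergence form and identify the frozen-coefficient linear problem; (ii) use Lemma \ref{lem: g*v gammaT} plus the $C^3$ regularity of $D,A$ and the algebra/composition properties of Hölder spaces to show the coefficients depend Hölder-continuously on $w$ with norms that are $O(T)$-small relative to $R$; (iii) invoke linear Schauder estimates to show $\mathcal{F}$ maps $B_R$ into itself for $T$ small; (iv) prove $\mathcal{F}$ is a contraction on $B_R$ in a weaker norm (e.g.\ $X_{0,T}$ or $X_{\gamma,T}$) by estimating the difference of two solutions against $\|w_1-w_2\|$, again exploiting the factor $T$ from Lemma \ref{lem: g*v gammaT} to make the Lipschitz constant less than one; (v) conclude existence and uniqueness of the fixed point, which is the desired solution on $[0,T_{\max}]$. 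Nonnegativity then follows from the parabolic maximum principle applied to the linear equation with the derived coefficients, and the mass-conservation identity \eqref{eq: mass conservation} follows by integrating the divergence-form equation over $\Omega$ and using the homogeneous Neumann condition to kill the boundary flux.

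For the blow-up alternative \eqref{blowup}, I would use the standard continuation argument: let $T_{\max}$ be the supremum of times for which a solution in $X_{2+\gamma,t}$ exists. If $T_{\max}<\infty$ and $\|u\|_{2+\gamma,t}$ stayed bounded as $t\to T_{\max}$, then the local existence result could be re-applied starting from a time just below $T_{\max}$ with a uniform existence-time lower bound (the existence time from steps (i)--(v) depends only on the $C^{2+\gamma}$ norm of the initial data, which is controlled by the assumed bound), extending the solution beyond $T_{\max}$ and contradicting maximality. Hence either $T_{\max}=\infty$ or the norm blows up.

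The main obstacle I anticipate is step (iv), the contraction estimate, because the difference $u_1-u_2$ of two frozen-coefficient solutions solves a linear equation whose coefficients differ in terms involving both $D(g*_t w_1)-D(g*_t w_2)$ and the \emph{gradient} term $(g*_t w)_x$; controlling the latter requires estimating $g*_t(w_1-w_2)$ in a norm that sees one spatial derivative, so I must either run the fixed point in $X_{2+\gamma,T}$ throughout (and accept that Lemma \ref{lem: g*v gammaT} as stated only gives $\gamma$-level control, necessitating a companion estimate for $(g*_t v)_x$ or a bootstrapping of the Schauder bound) or carefully arrange the contraction in an intermediate norm where the nonlinear coefficient map is genuinely Lipschitz with a small constant. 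Making the gradient-of-memory term compatible with the contraction while keeping uniform parabolicity is the delicate balance the proof must strike.
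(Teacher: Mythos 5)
Your proposal follows essentially the same route as the paper: freeze the memory argument $g*_t w$ to obtain a linear parabolic problem, use Lemma \ref{lem: g*v gammaT} together with the $C^3$ regularity of $D,A$ and linear Schauder estimates to build a self-map on a ball in $X_{2+\gamma,T}$, run a Banach fixed-point/contraction argument exploiting the $O(T)$ factor from the convolution lemma, then get nonnegativity from the maximum principle, mass conservation from the Neumann condition, and the blow-up alternative from the standard continuation argument. The obstacle you flag in step (iv) is resolved in the paper exactly as you anticipate: the contraction is carried out in $X_{2+\gamma,T}$ itself, with the needed control of $(g*_t\tilde v)_x$ obtained by applying Lemma \ref{lem: g*v gammaT} to $\tilde v_x$ (giving $\|g*_t\tilde v\|_{1+\gamma,T}\le 3T\|g\|_{C[0,T]}\|\tilde v\|_{1+\gamma,T}$), so no weaker intermediate norm is required.
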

\begin{proof}
    Fix $R \in \left( \left\|u_0\right\|_{C^{2+\gamma}(0,L)}, \infty \right)$ and $\gamma \in (0,1)$. Define 
    \begin{equation*}
        K_T = \left\{u \in X_{2+\gamma,T}: \left\|u\right\|_{2+\gamma,T} \leq R\right\}
    \end{equation*}
    For $v \in K_{\infty} = \underset{T\geq 0}{\cap} K_T$, consider the solution $u$ to the following equation 
    \begin{equation}\label{eq:auxiliary}\left\{
        \begin{aligned}
            &u_t  =  (D(g*_tv)u_{x})_x +  \left(A(g*_tv)u \left(g *_t v\right)_x \right)_x,  \quad (x,t)\in  (0,L) \times(0,T_{\max}]\\
        &  u_x(0,t)=u_x(L,t) = 0, \quad  t \in (0,T_{\max}]\\
        & u(x,0) = u_0(x) ,\quad x \in [0,L]\\
        \end{aligned}\right.
    \end{equation}
    By (Proposition 7.3.3 \citep{lunardi2012analytic}), there exists a unique classical solution defined on $[0,L]\times[0,T_1]$ for some $T_1>0$ independent of $v \in K_\infty$. By parabolic Schauder estimate (Theorem 4.31
 of \citep{lieberman1996second}), $\exists \; T_0 \in (0,T_1]$ sufficiently small such that $u \in K_{T_0}$. Define a map $S: K_{T_0} \rightarrow K_{T_0}$ by $S(v)=u$. $S$ can also be regarded as a map $K_T \rightarrow K_T$ for all $T \leq T_0$. 
 
 We claim that $S$ is a contraction in $K_T$ if $T$ is sufficiently small. Let $T \in(0,T_0]$ and take $v^{[1]}$, $v^{[2]} \in K_T$. Denote $u^{[i]}=S\left(v^{[i]}\right)$, $w^{[i]}= \left(g*_tv^{[i]}\right)_x$, $i=1,2$ and $\tilde{u} = u^{[1]} - u^{[2]}$, $\tilde{v} = v^{[1]} - v^{[2]}$, $\tilde{w} = w^{[1]} - w^{[2]}$. By calculation, $\tilde u$ satisfies the equation
 \begin{equation*}\left\{
     \begin{aligned}
         &\tilde u_t  =  (D(g*_tv^{[1]})\tilde u_{x})_x +\left(A(g *_t v^{[1]})w^{[1]}\tilde u\right)_x + F(x,t), &(x,t) \in (0,L)\times (0,T],\\
        &  u_x(0,t)=u_x(L,t) = 0, \quad  t \in (0,T_{\max}]\\
        & \tilde u(x,0) = 0 ,\quad x \in [0,L],\\
     \end{aligned}\right. 
 \end{equation*} where
 \begin{align*}
     &F(x,t) = \left((D(g*_tv^{[1]})-D(g*_tv^{[2]}))u^{[2]}_x\right)_x+\left(A(g*_tv^{[1]})u^{[2]}\tilde w \right)_x \\
     &\qquad \qquad +\left((A(g*_tv^{[1]})-A(g*_tv^{[2]}))u^{[2]}w^{[2]}\right)_x .
 \end{align*}

 By the Schauder estimate (Theorem 4.31
 of \citep{lieberman1996second}), there exists $C_1$ dependent on $L, T, M$, $\left\|D(g *_t v^{[1]})\right\|_{\gamma,T}$, $\left\|A(g *_t v^{[1]})w^{[1]}\right\|_{\gamma,T}$, and $\left\|\left(A(g *_t v^{[1]})w^{[1]}\right)_x\right\|_{\gamma, T}$ such that
 \begin{equation}
 \label{eq: u tilde}
 \begin{aligned}
     \left\|\tilde u \right\|_{2+\gamma,T} \leq C_1 \|F\|_{\gamma,T} .
     \end{aligned}
 \end{equation} 
By the \( C^{3} \) regularity of \( D \), we have
\[
\begin{aligned}
   \|\left((D(g *_t v^{[1]}) - D(g *_t v^{[2]}))u^{[2]}_x\right)_x\|_{\gamma, T} 
   & \leq \|(D(g *_t v^{[1]}) - D(g *_t v^{[2]}))u^{[2]}_x\|_{1+\gamma, T} \\
   & \leq \|D(g *_t v^{[1]}) - D(g *_t v^{[2]})\|_{1+\gamma, T} \|u^{[2]}_x\|_{1+\gamma, T} \\
   & \leq R \|D(g *_t v^{[1]}) - D(g *_t v^{[2]})\|_{1+\gamma, T} \\
   & \leq R \int_{g *_t v^{[2]}}^{g *_t v^{[1]}} \left\| D'(s) \right\|_{1+\gamma, T} \, \mathrm{d}s  \\
   & = R  \int_0^1 \left\| D'(v^s) (g *_t \tilde{v}) \right\|_{1+\gamma, T} \, \mathrm{d}s  \\
   & \leq R  \int_0^1 \left\| D'(v^s) \right\|_{1+\gamma, T} \, \mathrm{d}s  \|g *_t \tilde{v}\|_{1+\gamma, T},
\end{aligned}
\]
where \( v^s = s g *_t \tilde{v} + g *_t v^{[2]} \). Using Lemma \ref{lem: g*v gammaT}, we find
\[
\|g *_t \tilde{v}\|_{1+\gamma, T} = \|g *_t \tilde{v}\|_{C[0,T]} + \|g *_t \tilde{v}_x\|_{\gamma, T} \leq 3T \|g\|_{C[0,T]} \|\tilde{v}\|_{1+\gamma, T}.
\]
Moreover, for the integral term, by the boundedness of \( D'' \),
\[
\begin{aligned}
    \int_0^1 \left\| D'(v^s) \right\|_{1+\gamma, T} \, \mathrm{d}s 
    & = \int_0^1 \left\|D'(v^s) \right\|_{C[0,T]} \, \mathrm{d}s + \int_0^1 \left\| D''(v^s) v^s_x \right\|_{\gamma, T} \, \mathrm{d}s \\
    & \leq M + \int_0^1 \left\| D''(v^s) \, \right\|_{\gamma, T} \mathrm{d}s \|g *_t \tilde{v}_x + g *_t v^{[2]}_x\|_{\gamma, T}.
\end{aligned}
\]
Since \( v^{[2]} \) and \( \tilde{v} \) belong to \( X_{2+\gamma, T} \), by Lemma \ref{lem: g*v gammaT}, both \( \|g *_t \tilde{v}_x\|_{\gamma, T} \) and \( \|g *_t v^{[2]}_x\|_{\gamma, T} \) are bounded by \( 2T \|g\|_{C[0,T]} R \). Using (A1) regularity of \( D \), we deduce that \( D''(v^s) \in X_{\gamma, T} \) for $s \in [0,1]$. Therefore, there exists a constant \( c_1 > 0 \) such that
\[
\int_0^1 \left\| D'(v^s) \, \right\|_{1+\gamma, T} \mathrm{d}s \leq c_1.
\]
Combining the above, we conclude that
\[
\|\left((D(g *_t v^{[1]}) - D(g *_t v^{[2]}))u^{[2]}_x\right)_x\|_{\gamma, T} \leq 3c_1 RT \|g\|_{C[0,T]} \|\tilde{v}\|_{1+\gamma, T}.
\]
By a similar argument, for $A \in C^{3}[0,T]$, we can show  
  \begin{align*}
     \|\left((A(g*_tv^{[1]})-A(g*_tv^{[2]}))u^{[2]}w^{[2]}\right)_x\|_{\gamma,T}&\le 3c_2 RT \left\|g\right\|_{C[0,T]} \left\|\tilde v\right\|_{2+\gamma,T},
 \end{align*}
 for some constant $c_2$.
 By Lemma \ref{lem: g*v gammaT} and the condition (A1), we have 
  \begin{equation*}
     \begin{aligned}
         \quad \left\|\left(A(g*_tv^{[1]})u^{[2]}\tilde w\right)_x \right\|_{\gamma,T} 
         &\le  \left\|A(g*_tv^{[1]})u^{[2]}\tilde w \right\|_{1+\gamma,T}\\
         &\le  \left\|A(g*_tv^{[1]})\right\|_{1+\gamma,T} \left\| u^{[2]} \right\|_{1+\gamma,T} \left\| \tilde w \right\|_{1+\gamma,T}\\
 &\le 3c_3 R T \left\|g\right\|_{C[0,T]} \left\|\tilde v\right\|_{2+\gamma,T},
     \end{aligned}
 \end{equation*}
 where $c_3 = M + 2c_2  TR \left\|g\right\|_{C[0,T]} $. 
Similar estimates hold for $\left\|D(g *_t v^{[1]})\right\|_{\gamma,T}$, $\left\|A(g *_t v^{[1]})w^{[1]}\right\|_{\gamma,T}$, and $\left\|\left(A(g*_tv^{[1]})w^{[1]}\right)_x\right\|_{\gamma, T}$. Therefore, $C_1$ depends only on $L$, $T$, $M$, $\left\|g\right\|_{C([0,T])}$, $r$, and $R$.
    
 Now we continue with \eqref{eq: u tilde} to get for small enough $T>0$, we obtain
 \begin{equation*}
 \label{eq: u tilde 2}
     \begin{aligned}
         \left\|\tilde u \right\|_{2+\gamma,T} &\leq C_1 3 (c_1 + c_2 + c_3) RT \left\|g\right\|_{C[0,T]} \left\|\tilde v \right\|_{2+\gamma,T}. 
     \end{aligned}
 \end{equation*}
When we take small enough $T$, the constant before $\left\|\tilde v\right\|_{2+\gamma,T}$ can be less than 1. Therefore $S$ is a contraction when $T$ is small enough. By Banach fixed point theorem, there exists a unique $ u \in K_T$, such that $S(u)=u$. That is, $u$ is the unique solution in $[0,T]$. 
We deduce from \eqref{eq:auxiliary} and the nonnegative initial data $u_0$ with $u_0\not\equiv 0$ that the fixed point $u$ is nonnegative by using the maximum principle. 
By the standard argument, the solution $u$ can be extended up to some $T_{\max}>0$. Since the above choice of $T$ depends only on $\|u_0\|_{C^{2+\gamma}[0,L]}$, the conclusion \eqref{blowup} follows. Due to the homogeneous Neumann boundary condition on $u$, \eqref{eq: mass conservation} follows from $(g*_tu)_x = g*_t(u_x)$.
\end{proof}

\begin{remark}
    
    As we have seen from the existence result, it is possible to observe finite time blow-up from the equation \eqref{eq: general model no f}.
    For example, if we take $g(t)=\tau^{-1}e^{-t/\tau}$ and consider the system \eqref{eq: general model no f} becomes 
    \begin{equation}\label{eq: special case}
    \left\{\begin{alignedat}{2}
        &u_t = \nabla\cdot(D(v)\nabla u + A(v)u\nabla v) && \text{ in } \Omega \times (0,T),\\
        &\tau v_t = u - v,\\
        &\partial_\nu u = \partial_\nu v =0 && \text{ on } \partial\Omega \times [0,T],\\
        &u(x,0) = u_0(x), \, v(x,0) = 0 && \text{ in }\Omega.
    \end{alignedat}\right.
\end{equation} If we incorporate diffusion $\eps\Delta v$ in the second equation, it reads
\begin{equation}\label{eq: chemotaxis}
    \left\{\begin{alignedat}{2}
        &u_t = \nabla\cdot(D(v)\nabla u + A(v)u\nabla v), \\
        &\tau v_t = \eps\Delta v+ u - v.
    \end{alignedat}\right.
\end{equation} 
The system \eqref{eq: special case} can be considered a limit system of \eqref{eq: chemotaxis} as \( \varepsilon \to 0 \). Previous studies have explored finite-time blow-up phenomena in special cases of \eqref{eq: chemotaxis} \citep{Winkler2010}. While the solution of \eqref{eq: chemotaxis} exists globally, its height depends on \( \varepsilon \), 
so we may expect the solution or its derivative of \eqref{eq: special case} blows up when $\eps\to 0$ \citep{ZAWang2021,Choi2024}. 
\end{remark}
\begin{remark}
 To accurately observe the dispersal behavior of animals, it is preferable to exclude population dynamics and ensure that the total number of animals within the domain remains conserved. This conservation can be expressed as
\[
\int_\Omega u(x,t) \,dx = \int_\Omega u_0(x) \,dx.
\]
To maintain a constant total population over time, we mainly discussed the case with a homogeneous Neumann boundary condition:
\[
\nabla u \cdot \nu = 0 \quad \text{on } \partial\Omega \times (0,T],
\]
where \( \nu \) is the outward normal vector to the boundary \( \partial\Omega \). This boundary condition ensures that no flux of \( u \) crosses the boundary. Notice that memory quantity $g*_tu$ also satisfies the homogeneous boundary condition.

\end{remark}

\begin{remark} Note that the existence result also holds with other boundary conditions, such as the Dirichlet and the Robin boundary conditions. In the modeling perspective, the homogeneous Dirichlet boundary condition represents a hostile boundary where animals die on the boundary. In this case, the total population, however, goes to 0 as $t\to \infty$ unless it has population dynamics.
By including a suitable reaction term, such as \( f(x, u) \in C^{1,3}\left(\mathbb{R}; \bar{\Omega} \times \mathbb{R}\right) \), which satisfies the condition that there exists a constant \( \bar{u} \) such that \( f(x, u) < 0 \) for any \( x \in \Omega \) when \( u > \bar{u} \), and \( f(x, u) \geq 0 \) when \( 0 \leq u < \bar{u} \), local existence can also be established using a similar argument. Moreover, global existence is expected to hold if \( f \) is bounded.
\end{remark}

\section{Linear stability of constant steady states for a special case}
\label{app:linear stability}
We investigate the linear stability of constant steady states for system \eqref{eq: special case}, which is a special case of the model \eqref{eq: general model no f} with a weak kernel $g(t) = \tau^{-1}e^{- t/\tau}$. 
Let $(\bar u, \bar v)$ be a constant steady-state of the model \eqref{eq: special case}. Then it satisfies $\bar u = \bar v$. 
Let us define $$\chi(s): = -\frac{A( s)s}{ D( s)}.$$
\begin{lemma}\label{lem: stability}
The constant steady state $(\bar u,\bar v)$ of \eqref{eq: special case} is linearly unstable if $\chi( \bar u)>1$. It is linearly stable if $\chi(\bar u)<1$.
\end{lemma}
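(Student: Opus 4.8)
The plan is to carry out a standard linearized stability analysis around the homogeneous equilibrium $(\bar u,\bar v)=(\bar u,\bar u)$, using the spectral decomposition associated with the Neumann Laplacian on $\Omega$. First I would set $u=\bar u+\phi$ and $v=\bar u+\psi$ with small perturbations $\phi,\psi$, substitute into \eqref{eq: special case}, and discard terms of order two or higher. Because $\nabla u=\nabla\phi$ and $\nabla v=\nabla\psi$ are already first order, the coefficients $D(v)$ and $A(v)u$ may be frozen at their equilibrium values, so the flux linearizes to $D(\bar u)\nabla\phi+A(\bar u)\bar u\,\nabla\psi$. Taking the divergence, the linearized system becomes
\begin{equation*}
\phi_t = D(\bar u)\Delta\phi + A(\bar u)\bar u\,\Delta\psi,\qquad \tau\psi_t=\phi-\psi,
\end{equation*}
together with the homogeneous Neumann condition on $\phi$.

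Next I would expand $\phi,\psi$ in the eigenfunctions $\varphi_k$ of $-\Delta$ under Neumann boundary conditions, with eigenvalues $0=\mu_0<\mu_1\le\mu_2\le\cdots$. For each mode this reduces the dynamics to a $2\times2$ constant-coefficient linear system with matrix
\begin{equation*}
J_k=\begin{pmatrix} -D(\bar u)\mu_k & -A(\bar u)\bar u\,\mu_k\\[2pt] \tfrac1\tau & -\tfrac1\tau\end{pmatrix}.
\end{equation*}
Stability is then equivalent to every $J_k$ having spectrum in the open left half-plane, which by the Routh--Hurwitz criterion for a $2\times2$ matrix amounts to $\operatorname{tr}J_k<0$ and $\det J_k>0$. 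The trace $-D(\bar u)\mu_k-\tfrac1\tau$ is always negative, so the determinant alone decides the outcome. A direct computation gives
\begin{equation*}
\det J_k=\frac{\mu_k}{\tau}\bigl(D(\bar u)+A(\bar u)\bar u\bigr)=\frac{\mu_k D(\bar u)}{\tau}\bigl(1-\chi(\bar u)\bigr),
\end{equation*}
using $A(\bar u)\bar u=-\chi(\bar u)D(\bar u)$ from the definition of $\chi$. Hence for every $k\ge1$ the determinant is positive precisely when $\chi(\bar u)<1$ and negative when $\chi(\bar u)>1$; in the latter case $J_k$ has a real positive eigenvalue (opposite-sign real roots), which yields instability, while in the former case all modes are stable, since $\inf_{s\ge0}D(s)>0$ by (A1) guarantees $D(\bar u)>0$.

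The one point requiring care—and what I expect to be the main, though mild, obstacle—is the zeroth mode $\mu_0=0$, for which $\det J_0=0$ and $J_0$ carries a neutral eigenvalue $0$ alongside $-\tfrac1\tau$. This neutral direction is not a genuine instability: it reflects the conservation of total mass \eqref{eq: mass conservation}, which confines admissible perturbations to the invariant subspace $\int_\Omega\phi\,dx=0$. Restricting to mass-preserving perturbations (equivalently, quotienting out the one-parameter family of spatially homogeneous states) removes the $k=0$ mode, and the dichotomy is then governed entirely by the $k\ge1$ modes above. I would state this restriction explicitly so that the assertion ``linearly stable if $\chi(\bar u)<1$'' is unambiguous, and I would note that for instability it suffices that a single mode, e.g. $k=1$, destabilizes, which is exactly what $\chi(\bar u)>1$ provides.
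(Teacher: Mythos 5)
Your proposal follows essentially the same route as the paper: linearize about $(\bar u,\bar u)$, decompose in Neumann eigenmodes of $-\Delta$, and apply the trace--determinant (Routh--Hurwitz) criterion to the resulting $2\times 2$ mode matrices, with the sign of $\det$ reducing to the sign of $1-\chi(\bar u)$. If anything your write-up is slightly more careful than the paper's: you correctly place $\tau^{-1}$ in the $(2,1)$ entry (the paper's displayed matrix has $1$ there, which would shift the threshold away from $\chi=1$ unless $\tau=1$), and you explicitly dispose of the neutral $\mu_0=0$ mode via mass conservation, a point the paper passes over by restricting to positive eigenvalues.
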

\begin{proof}
If we linearize \eqref{eq: special case} around $(\bar u, \bar v)$, we obtain the perturbed system
\begin{equation*}
    \partial_t\begin{pmatrix}
        u \\ v
    \end{pmatrix} = \begin{pmatrix}
        D(\bar v ) \Delta & A(\bar v)\bar u \Delta \\
        1 & -\tau^{-1}
    \end{pmatrix}\begin{pmatrix}
        u \\ v
    \end{pmatrix}=: M(\bar u,\bar v) \begin{pmatrix}
        u \\ v
    \end{pmatrix}.
\end{equation*} Let $(\mu,\mathbf{x})$ be an eigenpair of $-\Delta$ under the homogeneous Neumann boundary condition. Consider a matrix 
\begin{equation*}
    B(\bar u,\bar v):=\begin{pmatrix}
        -D(\bar v ) \mu & -A(\bar v)\bar u \mu \\
        1 & -\tau^{-1}
    \end{pmatrix}.
\end{equation*} Let $(\lambda, \mathbf{c})$ be an eigenpair of $B$. Then 
\begin{equation*}
    A(\bar u, \bar v)(\mathbf{x}e^{\lambda t}\mathbf{c}) = B(\bar u, \bar v)(\mathbf{x}e^{\lambda t}\mathbf{c}) = \lambda \mathbf{x}e^{\lambda t}\mathbf{c} = \partial_t (\mathbf{x}e^{\lambda t}\mathbf{c}),
\end{equation*} and hence $\mathbf{x}e^{\lambda t}\mathbf{c}$ is a solution of $A$. Therefore, the sign of eigenvalues of $B$ determines the linear stability of \eqref{eq: special case}.
We have $\text{tr}(B)= -D(\bar v)\mu -\tau^{-1}<0$ and $\det(B) = \mu(\tau^{-1}D(\bar v) + A(\bar v)\bar u)$. Therefore, $(\bar u,\bar v)$ is linearly unstable if and only if there exists $\mu>0$ such that $\det(B)<0$. For any positive eigenvalue $\mu$ of $-\Delta$ under the homogeneous boundary condition, if $\chi(\bar u)>1$, $\det(B)<0$ because of the relation $\bar v = \bar u$. This concludes our lemma.
    
\end{proof}

\bibliographystyle{apalike2}
\bibliography{reference}

\end{document}